\DeclareFontFamily{U}{bbold}{}
\DeclareFontShape{U}{bbold}{m}{n}
   {  <5> <6> <7> <8> <9> gen * bbold
      <10> <10.95> bbold10
      <12> <14.4> bbold12
      <17.28> <20.74> <24.88> bbold17
   }{}
\DeclareSymbolFont{bbold}{U}{bbold}{m}{n}
\DeclareMathSymbol{\BssD}{3}{bbold}{"01}
\DeclareMathSymbol{\BssS}{3}{bbold}{"06}
\DeclareMathSymbol{\BssP}{3}{bbold}{"05}
\DeclareMathSymbol{\BssReduceq}{3}{bbold}{"3C}
\newcommand{\reduceq}{\preceq}
\newcommand{\IR}{\mathbb{R}}
\newcommand{\IA}{\mathbb{A}}
\newcommand{\IF}{\mathbb{F}}
\newcommand{\IE}{\mathbb{E}}
\newcommand{\IS}{\mathbb{S}}
\newcommand{\IW}{\mathbb{W}}
\newcommand{\IB}{\mathbb{B}}
\newcommand{\IQ}{\mathbb{Q}}
\newcommand{\IL}{\mathbb{L}}
\newcommand{\IK}{\mathbb{K}}
\newcommand{\IZ}{\mathbb{Z}}
\newcommand{\IX}{\mathbb{X}}
\newcommand{\IH}{\mathbb{H}}
\newcommand{\IN}{\mathbb{N}}
\newcommand{\Universe}{\mathcal{U}}
\newcommand{\BSS}{{\rm BSS}\xspace}
\newcommand{\BCSS}{{\rm BSS}\xspace}
\newcommand{\ball}{B}
\newcommand{\dist}{\operatorname{dist}}
\newcommand{\dom}{\operatorname{dom}}
\newcommand{\range}{\operatorname{range}}
\newcommand{\trdeg}{\operatorname{trdeg}}
\newcommand{\cf}[1]{\mathbf{1}_{#1}}
\newcommand{\calC}{\mathcal{C}}
\newcommand{\calF}{\mathcal{F}}
\newcommand{\calM}{\mathcal{M}}
\newcommand{\calN}{\mathcal{N}}
\newcommand{\calR}{\mathcal{R}}
\newcommand{\calT}{\mathcal{T}}
\newcommand{\sign}{\operatorname{sign}}
\newcommand{\sepigraph}{\operatorname{s-epigraph}}
\newcommand{\shypograph}{\operatorname{s-hypograph}}
\newcommand{\person}[1]{\textsc{#1}}
\newcommand{\aname}[1]{\emph{#1}}
\newcommand{\mycite}[2]{{\rm\cite[\textsc{#1}]{#2}}}
\newcommand{\closure}[1]{\overline{#1}}
\newcommand{\Cantor}{\calC}
\newcommand{\KleeneS}{\Sigma}
\newcommand{\KleeneP}{\Pi}
\newcommand{\KleeneD}{\Delta}
\newcommand{\BorelS}{\bm{\Sigma}}
\newcommand{\BorelP}{\bm{\Pi}}
\newcommand{\BorelD}{\bm{\Delta}}
\newcommand{\Card}{\operatorname{Card}}
\newcommand{\Fsigma}{\text{F}_{\sigma}}
\newcommand{\Gdelta}{\text{G}_{\delta}}
\newcommand{\COMMENTED}[1]{}
\newtheorem{fac}[thm]{Fact}
\theoremstyle{plain}\newtheorem{question}[thm]{Question}
\def\doi{7 (3:11) 2011}
\begin{document}

\title[Real Analytic Machines and Degrees]{%
Real Analytic Machines and Degrees: \\
A Topological View on Algebraic Limiting Computation\rsuper*}
\author[T.~G\"{a}rtner]{Tobias G\"{a}rtner\rsuper a}
\address{{\lsuper a}Universit\"{a}t des Saarlandes, GERMANY}
\email{tobi.gaertner@web.de}
\author[M.~Ziegler]{Martin Ziegler\rsuper b}
\address{{\lsuper b}Technische Universit\"{a}t Darmstadt, GERMANY}
\email{ziegler@mathematik.tu-darmstadt.de}
\thanks{Supported by the 
\emph{Deutsche Forschungsgemeinschaft}
with project \texttt{Zi\,1009/2-1}. We are grateful to \person{Arno Pauly}
for contributing Theorem~\ref{t:Arno}; and to \person{Vassilios Gregoriades}
for raising questions that led to Section~\ref{s:Composition}.
Further thank is due to the anonymous referees 
who also helped simplify the proof to Example~\ref{x:Pairing}
and provided several further constructive remarks.}
\titlecomment{{\lsuper*}A preliminary version of this work had appeared
in \emph{Proc. CCA 2011}, EPTCS vol.\textbf{24},
\url{arXiv:1006.0398v1}}

\keywords{Limiting Computation, Blum-Shub-Smale Model, Borel
Hierarchy, Halting Oracle}
\subjclass{F.4.1, F.1.1, G.1.0}
\amsclass{68W30, 03E15}

\begin{abstract}
We study and compare in degree-theoretic ways
(iterated Halting oracles analogous to Kleene's
arithmetical hierarchy, and the Borel hierarchy of
descriptive set theory) the capabilities and limitations
of three models of real computation:
\BSS machines (aka real-RAM) 
and strongly/weakly analytic machines as
introduced by Hotz et al. (1995).
\end{abstract}
\maketitle
\section{Introduction} \label{s:Intro}
The Turing machine as standard model of (finite) computation and 
computational complexity over discrete universes
like $\IN$ or $\{0,1\}^*$ 
suggests two both natural but distinct 
ways of extension to real numbers:
Already Alan Turing \cite{Turing2} considered \emph{in}finitely
long calculations producing as output a sequence of integer fractions
$r_n/s_n\in\IQ$ (i.e. discrete objects) approximating some real $x\in\IR$
up to error $2^{-n}$.

The first model (dominant in Recursive Analysis)
reflects that actual digital computers can operate in each
step only on finite information, and in particular with limited precision 
on real numbers \mycite{Theorem~4.3.1}{Weihrauch}.
It is, however, often criticized \cite{Koepf} for the consequence 
that any computable function must necessarily be continuous. 
In the second model, simple discontinuous functions like 
Heaviside or Gau\ss{} Staircase are computable, whereas 
intricate and intuitively non-computable functions 
(such as the characteristic of Mandelbrot's fractal set) 
can indeed be proven uncomputable.
Criticism of this model arises from its ability to compute
certain pathological functions \mycite{Example~9.7.2}{Weihrauch}
but not as simple functions as square root or exponential.
A formal introduction to this model is deferred to Section~\ref{s:BSS}.
We refer to \cite{Ning,Boldi,Emperor} for comparisons between both models.

\subsection{Oracle Computation and Turing Degrees} \label{s:Post}
Classical (i.e. discrete) computability and complexity theory
had, almost from the very beginning \cite{Turing3}, started
considering machines with access to oracles: not so much to
model actual computational practice, but because it permits to formally compare
problems according to their degree of uncomputability or complexity
\cite{Soare,Papadimitriou}. 
For a universe $\Universe$ like $\IN$ or,
equivalently (cmp. Example~\ref{x:Pairing} below), $\{0,1\}^*$,
the Halting problem $H\subseteq\Universe$ 
(sometimes denoted as \emph{jump} of $\emptyset$) for instance
becomes trivially decidable when granted oracle access to $H$;
whereas the iterated Halting problem (or \emph{jump} of $H$)
\begin{equation} \label{e:Halting2}
H^H \;:=\; \big\{ \langle M,\vec x\rangle:
 \text{oracle Turing machine $M^H$ 
  terminates on input $\vec x\in\Universe$}\big\} \enspace , 
\end{equation}
that is the question of termination of a given such Turing machine
with $H$-oracle, remains undecidable by an $H$-oracle Turing machine.

The class of semidecidable problems is often denoted as
$\KleeneS_1$; $\KleeneP_1$ are their complements, and
$\KleeneD_1=\KleeneS_1\cap\KleeneP_1$ the decidable problems.
$\KleeneS_2$ are problems of the form
\begin{equation} \label{e:Kleene2}
\;\big\{ \vec x\in\Universe \;\big|\; \exists\vec y\in\Universe
\;\forall\vec z\in\Universe: 
  \;\langle\vec x,\vec y,\vec z\rangle\in T \big\}
\end{equation}
with decidable $T\subseteq\Universe$; $\KleeneP_2$ consists of complements
of $\KleeneS_2$-problems, that is problems defined by
``$\forall\exists$''-formulas. Similarly,
``$\exists\forall\exists$'' defines $\KleeneS_3$;
and so on: \person{Kleene}'s \aname{Arithmetical Hierarchy}.
\person{Post}'s \aname{Theorem} \mycite{Section~IV.2}{Soare} 
asserts for a set $S\subseteq\Universe$ 
the following to be equivalent:
\begin{enumerate}[a)]
\item 
 $S$ is semidecidable relative to the Turing Halting Problem $H$.
\item
 $S\in\KleeneS_2$, i.e. has the form (\ref{e:Kleene2})
 with decidable $T$.
\item
 $S$ is many-one reducible to the iterated Halting problem 
 (written ``$S\reduceq H^H$'').
\end{enumerate}\vspace{3 pt}

\noindent Analogously, $S$ is semidecidable relative to $H^H$ ~iff~
$S\in\KleeneS_3$ ~iff~ $S$ is many-one reducible to $H^{H^H}$.
Some natural $\KleeneS_2$-complete and $\KleeneS_3$-complete problems
are identified in \mycite{Section~IV.3}{Soare}.

Similar investigations have been 
pursued in both the \BSS model \cite{Cucker} and in Recursive
Analysis \cite{Ho,Xizhong}. However, and as opposed to the discrete
setting, oracles turn out to be of limited help in the case of real functions:
both square root and exponential still remain uncomputable to a \BSS machine; 
and computability in Recursive Analysis 
remains restricted to continuous functions \cite{ToCS,CCA06}. 

\subsection{Limiting Computation}
This subsumes calculations whose result occurs 
after $\geq\omega$ steps. It may include transfinite cases \cite{Hamkins}
although we shall restrict to computations producing a 
countable sequence of outputs that `converge' to the result.
For finite binary strings (i.e. w.r.t. the discrete topology), 
this appears in the literature as 
\emph{limiting recursion} \cite{Gold,Schmidhuber},
\emph{inductive algorithms} \cite{Burgin}, or
\emph{trial-and-error predicates} \cite{Putnam}.

On continuous universes like Cantor or Baire space or reals, 
a classical Turing machine
processing finite information in each step needs an infinite amount of time.
This led to the field of Recursive Analysis ($\IR^d$) and
\person{Weihrauch}'s Type-2 Theory ($\{0,1\}^\omega$, $\IN^\omega$).
Here, convergence is required to be \emph{effective} in the sense
that the $n$-approximate output differs from the ultimate result
by no more than an absolute error of $2^{-n}$; equivalently,
the approximations are to be accompanied by absolute error bounds
tending to 0.

On the other hand, relaxing the output 
(but not input \mycite{Section~6}{Naive})
to merely converging approximations
does render some discontinuous functions computable;
and in fact corresponds to climbing up the effective Borel
Hierarchy \cite{EffBorel}:

\subsection{Topological Complexity and Borel Degrees} \label{s:Borel}
Recall 
the Borel Hierarchy on an arbitrary topological space $X$:
$\BorelS_1(X)$ denotes the family of open subsets of $X$,
$\BorelP_1(X)$ that of closed sets (i.e. complements of open ones),
$\BorelS_2(X)$ the class of countable unions of closed sets (aka $\Fsigma$), 
$\BorelP_2(X)$ that of countable intersections of open sets (aka $\Gdelta$,
i.e. complements of $\Fsigma$), and iteratively
$\BorelS_{k}(X)$ the class of countable unions of
$\BorelP_{k-1}$-sets,
and $\BorelP_{k}(X)$ their complements;
$\BorelD_k(X)=\BorelS_k(X)\cap\BorelP_k(X)$.

We shall frequently make use of the folklore

\begin{fac} \label{f:Borel}\hfill
\begin{enumerate}[a)]
\item It holds
$\IQ\in\BorelS_2(\IR)\setminus\BorelP_2(\IR)$.
\item It holds
$\IQ\times(\IR\setminus\IQ)\not\in\BorelP_2(\IR^2)\cup\BorelS_2(\IR^2)$.
\item If $X\in\BorelS_k(Y)$ and $Y\in\BorelS_k(Z)$,
  then $X\in\BorelS_k(Z)$; similarly for $\BorelP_k$.
\end{enumerate}
\end{fac}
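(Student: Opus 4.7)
The overall plan is to handle the three parts in sequence, using (a) to feed into (b), and proving (c) by induction on $k$ together with a short auxiliary lemma about the subspace topology.

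For part~(a), the inclusion $\IQ\in\BorelS_2(\IR)$ is immediate from $\IQ=\bigcup_{q\in\IQ}\{q\}$, a countable union of singletons (each closed). For the non-inclusion $\IQ\notin\BorelP_2(\IR)$, I would argue by contradiction via the Baire Category Theorem: if $\IQ=\bigcap_n U_n$ with each $U_n$ open in $\IR$, then each $U_n\supseteq\IQ$ is dense, whence $\IR\setminus U_n$ is closed with empty interior; together with the singletons $\{q\}$ for $q\in\IQ$ this would exhibit $\IR$ as a countable union of nowhere-dense sets, contradicting that $\IR$ is Baire.

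For part~(b) the strategy is to exploit that horizontal and vertical sections of a $\BorelS_k$- (resp.\ $\BorelP_k$-) subset of $\IR^2$ lie in the same class in $\IR$: sections of closed sets are closed (being preimages under the continuous inclusion $x\mapsto(x,y_0)$), and taking sections commutes with countable unions and countable intersections. Assuming $A:=\IQ\times(\IR\setminus\IQ)\in\BorelS_2(\IR^2)$ and fixing any rational $q$, the horizontal section $\{y:(q,y)\in A\}=\IR\setminus\IQ$ would be $\Fsigma$, making $\IQ$ a $\Gdelta$ — contradicting~(a). Symmetrically, $A\in\BorelP_2(\IR^2)$ would force the vertical section at any irrational $y_0$, namely $\IQ$ itself, to be $\Gdelta$, again contradicting~(a).

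For part~(c) I proceed by induction on $k$. The base case $k=1$ is trivial: an open subset of an open subset of $Z$ is open in $Z$, and dually for closed. For the inductive step I first establish a subspace lemma: $A\in\BorelS_k(Y)$ iff $A=B\cap Y$ for some $B\in\BorelS_k(Z)$, and likewise for $\BorelP_k$. This follows by a parallel induction on $k$: writing $A=\bigcup_n A_n$ with $A_n\in\BorelP_{k-1}(Y)$ and applying the inductive subspace lemma to get $A_n=D_n\cap Y$ with $D_n\in\BorelP_{k-1}(Z)$, one takes $B:=\bigcup_n D_n\in\BorelS_k(Z)$. Granting this lemma, if $X\in\BorelS_k(Y)$ and $Y\in\BorelS_k(Z)$, write $X=B\cap Y$ with $B\in\BorelS_k(Z)$; then $X$ is a finite intersection of two $\BorelS_k(Z)$-sets, hence in $\BorelS_k(Z)$ using that this class is closed under finite intersections (a standard distributivity argument, inherited from the closure of $\BorelP_{k-1}$ under finite intersections). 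The $\BorelP_k$-case is dual.

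The main bookkeeping obstacle lies in~(c), namely in carrying through the parallel $\BorelS_k/\BorelP_k$ induction for the subspace lemma without circularity; parts (a) and (b) reduce to two classical moves (Baire category and preservation under sections) and should present no difficulty.
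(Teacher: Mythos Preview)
Your proposal is correct and follows essentially the same route as the paper: Baire category for~(a), preservation of Borel classes under sections for~(b), and the subspace lemma $X=Y\cap X'$ combined with closure of $\BorelS_k$ under finite intersection for~(c). The only cosmetic difference is that in~(a) the paper applies Baire directly to $\IR\setminus\IQ$ being of second category (forcing some closed piece $A_n$ of $\IR\setminus\IQ$ to have nonempty interior), whereas you adjoin the rational singletons to cover all of $\IR$; both are the same argument.
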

\proof\hfill
\begin{enumerate}[a)]
\item $\IQ$ is a countable union of (closed) singletons,
hence in $\BorelS_2$. Suppose $\IQ\in\BorelP_2$,
i.e. $\IQ^\complement=\bigcup_n A_n$ with closed $A_n$.
But $\IR\setminus\IQ$ being of second category
according to \aname{Baire's Theorem},
there has to exist at least one with nonempty interior:
$\emptyset\neq\closure{A_n}^\circ=A_n^\circ\subseteq(\IR\setminus\IQ)^\circ$,
a contradiction. 
\item
$\IQ\times(\IR\setminus\IQ)$ contains as sections both
$\IQ\not\in\BorelP_2$ and $\IR\setminus\IQ\not\in\BorelS_2$,
hence cannot be in $\BorelP_2$ nor in $\BorelS_2$.
\item
Since $Y$ is equipped with the relative topology of $Z$,
induction on $k$ shows that $X\in\BorelS_k(Y)$ implies
$X=Y\cap X'$ for some $X'\in\BorelS_k(Z)$; and the latter
class is closed under finite intersection.\qed
\end{enumerate}

\noindent If $X$ is a polish space, the Borel hierarchy is strict 
and contains complete members \cite{Kechris}.
According to \textsf{Alexandrov's Theorem}, every
$\Gdelta$--subset of a Polish space (such as $\IR^d$) 
is again Polish.
In the sequel, all spaces $X$ under consideration will be
(not necessarily closed)
subspaces of some $\IR^d$ ($d\in\IN$) equipped with
the Euclidean topology. 

Also recall that continuity of a (total) function
$f:X\to Y$ means that preimages $f^{-1}[V]$ of open 
sets $V\subseteq Y$ are open in $X$;
and, more generally, $f$ is called $\BorelS_k$-measurable
if preimages of open subsets of $Y$ are in $\BorelS_k(X)$.

As common in classical computability theory, 
partial functions $f:\subseteq\IR^d\to\IR$ 
arise naturally also in the real case.
We distinguish between the Borel complexity of their
domain $\dom(f)\subseteq\IR^d$ and that of the 
\emph{total} function $f:\dom(f)\to\IR$; 
cmp. Remark~\ref{r:partial} below.

\subsection{Blum-Shub-Smale Machines} \label{s:BSS}
The \BSS model (over $\IR$) considers real numbers as entities
that can be read, stored, output,
added, subtracted, multiplied, divided, and compared exactly.
It captures the semantics of, e.g., the \texttt{FORTRAN} programming
language and essentially coincides with the \textsf{real-RAM} model
underlying, e.g., Algorithmic Geometry \cite{compGeom}.
Specifically, a program consists of a sequence of 
arithmetic instructions (\texttt{+,-,$\times$,$\div$})
and branchings based on tests (\texttt{=,<}).
A countably infinite sequence of working registers 
can be accessed directly or via indirect addressing 
through dedicated integer-valued index registers. 
Assignments ($a$\texttt{:=}$b$) may copy data between registers
or initialize them to one of finitely many real constants listed
in the program. 
Upon start, the input vector $(x_1,\ldots,x_d)\in\IR^d$ is provided
in the real registers and, for uniformity purposes, its dimension $d$
in the index registers. If the machine terminates within finitely
many steps, the contents of the real registers (up to index given
by the index register) is considered as output.
For further details we refer to \cite{BCSS} or
\mycite{Definition~1.1}{Cucker}.

\begin{defi} \label{d:Partial}
For $\IX\subseteq\IR^*$, a set $\IL\subseteq\IX$ is called
\emph{\BSS-decidable \textsf{in} $\IX$} if some \BSS machine can,
given any $\vec x\in\IX$, report within finite time which 
one of $\vec x\in\IL$ or $\vec x\not\in\IL$ holds.
$\IL$ is \emph{\BSS-semidecidable \textsf{in} $\IX$} if some
\BSS machine terminates for every $\vec x\in\IL$ and
does not terminate 
(`\emph{diverges}') for every $\vec x\in\IX\setminus\IL$.
A (possibly partial) function $f:\subseteq\IR^*\to\IR^*$ is
\emph{\BSS-computable \textsf{in} $\IX$} if some \BSS
machine, on inputs $\vec x\in\dom(f)\cap\IX$, terminates
with output $f(\vec x)$ and
diverges on inputs $\vec x\in\IX\setminus\dom(f)$.
\end{defi}
Note that this common notion \mycite{Definition~4.2}{BCSS}
ignores the behavior on inputs outside of $\IX$.
In the case $\IX=\IR^*$, 
we simply speak of \BSS (semi-)decidability/computa\-bi\-lity of $\IL/f$.

\subsection{The Arithmetical Hierarchy for \BSS Machines}
\mycite{Theorems~2.11+2.13}{Cucker} has succeeded in generalizing 
Post's Theorem to oracle \BSS machines,
however based on entirely different arguments:

\begin{fac} \label{f:Cucker}
For a set $\IS\subseteq\IR^*$ the following are equivalent:
\begin{enumerate}[a)]
\item
 $\IS$ is semidecidable by a \BSS machine with oracle access to $\IH$.
\item
 There exists a \BSS-decidable set $\IW\subseteq\IR^*$ such that
\begin{equation} \label{e:Sigma2}
\IS\;=\;\big\{ \vec x\in\IR^* \;\big|\; \exists y\in\IN
\;\forall z\in\IN: \; (\vec x,y,z)\in\IW\big\} 
\enspace . \end{equation}
\item
 $\IS$ is \BSS many-one reducible to the \emph{iterated} \BSS Halting problem:
\[ \IS \;\BssReduceq\; \IH^\IH \;:=\; \big\{ (\langle\calM\rangle,\vec x):
 \text{oracle \BSS machine $\calM^{\IH}$ 
  terminates on input $\vec x$}\big\} \enspace . \]
\end{enumerate}
\end{fac}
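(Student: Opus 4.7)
The plan is to establish the cyclic chain $(a) \Rightarrow (b) \Rightarrow (c) \Rightarrow (a)$ in analogy with the classical proof of Post's theorem, while attending to the peculiarities of the BSS model. The implication $(c) \Rightarrow (a)$ is immediate: given a BSS many-one reduction $\vec x \mapsto (\langle \calN_{\vec x} \rangle, \vec x')$ to $\IH^\IH$, an $\IH$-oracle BSS machine semi-decides $\IS$ by first computing the reduction and then simulating $\calN_{\vec x}^\IH$ on $\vec x'$, resolving each $\IH$-query via its own oracle; the simulation halts precisely when $(\langle \calN_{\vec x} \rangle, \vec x') \in \IH^\IH$, i.e.\ when $\vec x \in \IS$.

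For $(b) \Rightarrow (c)$, I would construct a BSS machine $\calN$ that, on input $\vec x$, enumerates $y = 0, 1, 2, \ldots \in \IN$ and, for each $y$, asks its $\IH$-oracle a single query encoding the plain BSS search ``does some $z \in \IN$ satisfy $(\vec x, y, z) \notin \IW$?''. Since $\IW$ is BSS-decidable, this search is BSS-semi-decidable and hence $\BssReduceq \IH$; a negative oracle answer certifies $\forall z: (\vec x, y, z) \in \IW$. Thus $\calN^\IH$ halts as soon as some $y$ is so certified, so $\vec x \mapsto (\langle \calN \rangle, \vec x)$ reduces $\IS$ to $\IH^\IH$.

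For $(a) \Rightarrow (b)$, fix an oracle BSS machine $\calM^\IH$ semi-deciding $\IS$ and let $y \in \IN$ encode a complete purported halting trace of $\calM^\IH$ on $\vec x$: the number of steps, the sequence of branch outcomes, the finite list of oracle queries $\vec q_1, \ldots, \vec q_k$ with their claimed yes/no answers, and, for each ``yes'' answer, a witness $n_i$ that the underlying plain BSS machine halts on $\vec q_i$ within $n_i$ steps. Let $\IW$ consist of those $(\vec x, y, z)$ for which (i) $y$ codes a trace consistent with the actual branch and oracle behavior of $\calM$ on $\vec x$, (ii) each ``yes'' witness halts as claimed, and (iii) for every ``no'' query in $y$, the relevant machine has not halted within $z$ steps. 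Then $\IW$ is BSS-decidable, and $\vec x \in \IS$ iff $\exists y\,\forall z: (\vec x, y, z) \in \IW$.

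The main obstacle is clause (i), and it is the point where the BSS proof genuinely differs from Post's original: the real-valued intermediate register contents along the computation cannot be enumerated over $y \in \IN$. The remedy is that once the discrete control flow is fixed by $y$, the real registers at every step are uniquely determined by $\vec x$ and can simply be recomputed on the fly by the decider of $\IW$ using ordinary BSS arithmetic; checking that the branch outcomes stored in $y$ agree with these recomputed real values is then a plain BSS-decidable predicate. All other bookkeeping---G\"odel numbering of traces, verifying halting in $\leq n_i$ or $\leq z$ steps, matching oracle answers to stored witnesses---is a routine transfer of the classical argument.
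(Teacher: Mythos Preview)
The paper does not actually prove Fact~\ref{f:Cucker}; it is stated as a cited result from \textsc{Cucker}'s paper \cite{Cucker} (Theorems~2.11 and~2.13 there), with the remark that Cucker ``has succeeded in generalizing Post's Theorem to oracle \BSS machines, however based on entirely different arguments.'' So there is no proof in the present paper to compare against, and your proposal supplies one where the authors chose only to cite.

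Your argument is the natural transplant of Post's classical proof and is essentially sound. One wording slip: in $(a)\Rightarrow(b)$ you say $y\in\IN$ encodes ``the finite list of oracle queries $\vec q_1,\ldots,\vec q_k$,'' but those queries are real vectors (they carry the real input to the queried \BSS machine) and cannot sit inside a natural number. Your final paragraph already contains the fix---only the discrete data (step count, branch bits, oracle yes/no answers, and the integer halting-time witnesses $n_i$) go into $y$, while the queries $\vec q_i$ themselves are recomputed by the decider of $\IW$ from $\vec x$ and the stored control flow---so you should revise the earlier sentence to match. With that correction the construction goes through: the decider of $\IW$ replays the path dictated by $y$, checks at each branch that the recomputed real test agrees with the stored bit, verifies the ``yes'' witnesses by bounded simulation, and for each ``no'' query simulates $z$ steps; any deviation from the true run is caught either immediately or for some sufficiently large $z$.

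Since the paper flags that Cucker's original argument is \emph{not} the direct Post-style one, your route likely differs from his; but as the paper gives no details, nothing more can be said about the comparison from this source alone.
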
\vspace{3 pt}
\noindent Observe that, in spite of referring to a real complexity class,
quantifiers in Equation~(\ref{e:Sigma2}) range over integers.

\subsection{Analytic Machines} \label{s:Analytic}
In \cite{Vierke,Chadzelek}, a third model and kind of synthesis of the above two 
had been proposed: An \emph{analytic machine} is essentially a 
\BSS machine (i.e. with \emph{exact} arithmetic operations and tests)
permitted to \emph{approximate} the output; either with 
(\emph{strongly analytic}) 
or without (\emph{weakly analytic}) error bounds.
More precisely, let $\|\vec y\|:=\sum_i |y_i|$ denote the 1-norm
and $\dim(\vec y)=d$ for $\vec y\in\IR^d$.
Then computing $f:\IR^*\to\IR^*$ means
producing, on input $\vec x\in\dom(f)$, some infinite
sequence $\vec y_n\in\IR^*$ with, for $\vec y:=f(\vec x)$,

\begin{equation} \label{e:Analytic}
\begin{gathered}
\dim(\vec y_n)=\dim(\vec y) \qquad\text{ and }\qquad
\|\vec y_n-\vec y\|\leq2^{-n}
\quad\text{for all $n\in\IN$ \qquad (strong)} \\
\dim(\vec y_n)=\dim(\vec y) \quad\text{ for all but
  finitely many $n$ and}
\;\;\lim\nolimits_n\vec y_n=\vec y \qquad\text{ (weak)}
\end{gathered}
\end{equation}
Another variation depends on whether the `program' may employ
finitely many pre-stored real constants or not.
(Several further variants considered in \cite{Chadzelek} are
outside the scope of the present work\ldots)
Now Heaviside and square root and exponential function are easily 
seen to be computable by a strongly analytic machine. In fact, every
computation in Recursive Analysis or by a \BSS machine 
(without constants)
can be simulated on a strongly analytic machine (without constants).

\begin{rem} \label{r:partial}
Recall that a \BSS machine computing a partial function
$f:\subseteq\IR^*\to\IR^*$ is required to diverge 
on inputs $\vec x\not\in\dom(f)$. This convention is
common in the \BSS community and corresponds to what
in Recursive Analysis is called \emph{strong computability}
\mycite{Exercise~4.3.18}{Weihrauch}. Similarly,
a strongly/weakly analytic machine computing $f$ 
must violate Equation~(\ref{e:Analytic})
for $\vec x\not\in\dom(f)$. 
\end{rem}
Note that in the strong case, this amounts to the
output of either only a finite sequence 
(i.e. a terminating computation or an indefinite one which, 
however, eventually ceases to print further approximations)
or of one for which the following fails:
\begin{equation} \label{e:Variation}
\|\vec y_n-\vec y_m\| \;\leq\; 2^{-n}+2^{-m}  
\quad\forall n,m
\enspace .
\end{equation}
As a consequence, the domain (not to mention its complement)
of a strongly analytically computable partial function
in general need not be \BSS-semidecidable
(i.e. the domain of a \BSS-computable function):

\begin{exas} \label{x:Domain}\hfill
\begin{enumerate}[a)]
\item
There exists a \BSS-computable
function $f:\subseteq\IR^*\to\{1\}$ with $\dom(f)=\IH$,
the Halting problem for $\BSS$ machines:
\[ \IH \quad:=\quad
\big\{ (\langle\calM\rangle,\vec x): 
\text{constant-free \BSS machine $\calM$ terminates
on input $\vec x\in\IR^*$} \big \} \]
\item
There is a function $g:\subseteq\IR^*\to\{1\}$
computable by a strongly analytic machine
with $\dom(g)=\IH^\complement$.
\end{enumerate}
\end{exas}
\proof\hfill
\begin{enumerate}[a)]
\item
Define $f(\langle\calM\rangle,\vec x):=1$ for
$(\langle\calM\rangle,\vec x)\in\IH$,
$f(\langle\calM\rangle,\vec x):=\bot$ otherwise.
A universal \BSS machine can obviously compute this.
\item
Define $g(\langle\calM\rangle,\vec x):=\bot$ for
$(\langle\calM\rangle,\vec x)\in\IH$,
$g(\langle\calM\rangle,\vec x):=1$ otherwise.
Consider a variant of the universal \BSS machine which,
upon input $(\langle\calM\rangle,\vec x)$, simulates 
the computation of $\calM$ on $\vec x$ and at each step
outputs 1s but switches to alternating its outputs between 0s and 1s at each step
when $\calM$ terminates.
\qed\end{enumerate}
%
We will explore this discrepancy more closely in
Corollary~\ref{c:StrongBorel}a+d).
Technically this means that certain of our results (have to)
refer to \emph{some extension} of a given partial function;
cmp. Definition~\ref{d:Partial}.

\subsection{Overview}
Section~\ref{s:Strongly} explores the power of strongly analytic machines
by comparison to classical \BSS machines.
Roughly speaking, it turns out that \BSS-computable
total functions are $\BorelD_2$-measurable and
partial functions have $\BorelS_2$-measurable domains;
whereas total functions computable by strongly analytic
machines cover all $\BorelS_1$-measurable (i.e. continuous)
ones while including also some $\BorelS_2$-measurable ones---but
in order to cover all of them, the composition of two such functions
is sufficient and in general necessary.
Partial functions here have $\BorelP_3$-measurable domains.

We then (Section~\ref{s:Weakly}) proceed to weakly analytic machines. 
These are characterized as strongly analytic ones 
relative to the \BSS halting problem by establishing
a real variant of the Shoenfield Limit Lemma.
The proof is assisted by a particular notion of
weak semidecidability (Section~\ref{s:WeakSemi})
many-one equivalent to the iterated (i.e. jump of the)
\BSS Halting problem.
Similar to the strongly analytic
case from Example~\ref{x:Domain},
Section~\ref{s:Convergence} reveals the convergence
problem (i.e. domain of a function computable by a weakly analytic
machine) to not be weakly semidecidable in general.
In fact we show the complement of this problem equivalent 
to the jump of the iterated \BSS Halting problem.

\section{Exploring the Power of Strongly Analytic Machines} \label{s:Strongly}
Both Recursive Analysis and the \BSS model 
exhibit many properties similar to the classical theory of 
computation---although not all of them. 
Analytic Machines behave even more nicely in this sense:

\begin{exas} \label{x:Pairing}
A major part of discrete recursion theory relies on the existence
of a bicomputable pairing function, that is a bijective encoding 
\[
\IN\times\IN \;\ni\; (x,y) \;\mapsto\; \langle x,y\rangle \;\in\; \IN \]
and decoding of pairs of integers into a single one,
both computable by a Turing machine. 
($\langle x,y\rangle:=2^{x-1}\cdot(2y-1)$ for instance will do\ldots)
\\
This significantly differs from the real
case where 
there is no \BCSS-computable (even local) real pairing function:
\begin{enumerate}[a)]
\item
Let $\emptyset\neq U\subseteq\IR^2$ be open and $f:U\to\IR$ injective.
Then $f$ is not \BCSS-computable.
\item
Let $\emptyset\neq U\subseteq\IR^2$ be open and
$g:\subseteq\IR\to U$ surjective.
Then $g$ is not \BCSS-computable relative to any (set) oracle.
\item
There is, however, a strongly analytic machine without constants
computing a total bijection $h:\IR^2\to\IR$ and its inverse.
\end{enumerate}
\end{exas}
\proof\hfill
\begin{enumerate}[a)]
\item
Suppose $f$ is computable by some \BCSS machine;
then its path decomposition yields a non-empty open ball $\ball\subseteq U$
such that $f|_{\ball}$ is a rational function and in particular continuous.
But a continuous $f$ from a convex open $\ball\subseteq\IR^2$ to $\IR$
cannot be injective: 
Consider $\vec u,\vec v\in\ball$ and two disjoint paths $g,h:[0,1]\to\ball$ 
connecting them, that is with
$g(0)=\vec u=h(0)$ and $g(1)=\vec v=h(1)$ and
$g([0,1])\cap h([0,1])=\{\vec u,\vec v\}$. By continuity, it
follows from the intermediate value theorem that $f \circ g$ and $f \circ h$ must
have a common value on the open interval $(0,1)$. This contradicts the assumption
of $f$ being injective.
\item
Recall that for fields $\IF\subseteq\IE$,
the transcendence degree of $S\subseteq\IE$ (over $\IF$)
$\trdeg_{\IF}(S)$ denotes the cardinality of a largest
subset of $S$ algebraically independent over $\IF$;
equivalently: of a least $T\subseteq S$ such that
$\IF(S)$ is algebraic over $\IF(T)$.
In particular,
a finite $S$ is algebraically independent 
over a finite field extension $\IF(T)$ ~iff~
$\trdeg_{\IF}(S\cup T)=\Card(S)+\trdeg_{\IF}(T)$
\mycite{Proposition~5.1.2}{Cohn3}.
\\
Now observe that the output of a \BCSS-machine $\calM$ with constants $c_1,\ldots,c_d$
on input $x_1,\ldots,x_n$ is limited to the rational field extension
$\IQ(c_1,\ldots,c_d,x_1,\ldots,x_n)$. 
On the other hand, the transcendence degree 
$\trdeg_{\IQ}(\IR)$ of $\IR$ over $\IQ$ is infinite.
In particular, there exist $y,z\in\IR$ with $\{y,z\}$ algebraically independent
over $\IQ(c_1,\ldots,c_d)$. Now $\IQ^2$ is dense in $\IR^2$ and $U\subseteq\IR^2$
is open; hence there are $p,q\in\IQ$ with $(y+p,z+q)\in U$; 
and $\{y+p,z+q\}$ remains algebraically independent over $\IQ(c_1,\ldots,c_d)$.
So suppose there is a machine $\calM$ with constants $c_1,\ldots,c_d$ computing $g$
and in particular $(y+p,z+q)=g(x)$ for some $x$. Then 
$y+p,z+q\in\IQ(c_1,\ldots,c_d,x)$ by the above observation,
hence $\trdeg_{\IQ}(y+p,z+q,c_1,\ldots,c_d)\leq 
\trdeg_{\IQ}(c_1,\ldots,c_d,x)\leq
\trdeg_{\IQ}(c_1,\ldots,c_d)+1$; 
whereas algebraic independence of 
$\{y+p,z+q\}$ over $\IQ(c_1,\ldots,c_d)$ requires 
$\trdeg_{\IQ}(y+p,z+q,c_1,\ldots,c_d)
=2+\trdeg_{\IQ}(c_1,\ldots,c_d)$: contradiction.
\item
Observe that the mapping
\[ [0,1) \;\ni\; x=\sum\nolimits_{n=1}^\infty b_n 2^{-n} 
\;\mapsto\; (b_1,b_2,\ldots,b_n,\ldots)\in \{0,1\}^\IN   \enspace, \]
extracting from a given real number its\footnote{\label{myfn}%
Dyadic rationals $x=(2\ell+1)/2^k$ have two distinct such
expansions. For the purpose of well-definition, we here
refer to the one with only finitely many 1s.} binary 
expansion is computable digit by digit by (exact) 
comparison: For $n=1$, let $b_n:=1$ in case $x\geq1/2$
and $b_n:=0$ otherwise;
then iterate with $x\mapsto 2x-1$ and $n\mapsto n+1$.
\\
Conversely, a strongly analytic machine (but no \BSS machine)
can encode any binary sequence $b_n$ (say, of intermediate results)
into a real number $\sum\nolimits_{n=1}^\infty b_n 2^{-n}\in[0,1]$
by approximations $\sum\nolimits_{n=1}^N b_n 2^{-n}$
up to error $2^{-N}$.
\\
We will thus decode both $x,y\in[0,1)$ into their
binary expansions $a_n$ and $b_n$, merge the latter
into $c_{2n}:=a_n$ and $c_{2n-1}:=b_n$, and then 
re-code $(c_m)_{_m}$ into $z\in[0,1)$.
\\
This procedure obviously extends from $[0,1)$ to
$[0,\infty)$ with binary expansion \linebreak $\sum_{n=-k}^\infty a_n 2^{-n}$
and, incorporating also $\sign(x)$ and $\sign(y)$, 
to a pairing function $h:\IR\times\IR\to\IR$.
\\
Its inverse can be computed similarly by a strongly analytic machine:
From a given $z\in\IR$, extract from its binary expansion 
the (sub)sequence of those digits with even/odd index and
compose them into (approximations up to error $2^{-n}$ of) 
$x,y\in\IR$ with $h(x,y)=z$.
\qed\end{enumerate}
It is clear that a \BSS machine without constants
cannot compute the constant function $f(x)\equiv c$
unless $c\in\IQ$; and oracles do not help.
This is different for analytic machines:

\begin{prop} \label{p:Constants}
Let $\IH$ 
denote the Halting problem for \BSS machines
from Example~\ref{x:Domain}a).
To every strongly/weakly
analytic machine $\calM$ with \emph{recursive} constants,
there exists a \linebreak
strongly/weakly
analytic oracle machine $\calN$ 
such that $\calN^\IH$ is equivalent to $\calM$.
\end{prop}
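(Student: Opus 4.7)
The plan is for $\calN^\IH$ to simulate $\calM$ symbolically, using rational approximations for the recursive constants $c_i$ and consulting the oracle $\IH$ only to disambiguate $\calM$'s sign tests. Throughout I may assume $\calN$ itself is constant-free, the input $\vec x$ being delivered exactly as in standard \BSS.

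First I would record each $c_i$ via the Turing-computable rational sequence $\phi_i:\IN\to\IQ$ with $|\phi_i(n)-c_i|\leq 2^{-n}$ guaranteed by recursiveness; since Turing machines embed into constant-free \BSS on integer registers, $\calN$ can produce $\phi_i(n)$ on demand inside its run. During simulation $\calN$ maintains, for every real register of $\calM$, a rational-function expression $R(\vec x,\vec c)$ describing how that register was built from the input and the constants, and updates it symbolically on every arithmetic instruction of $\calM$.

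The main obstacle is executing $\calM$'s sign tests correctly when only $\phi_i(n)$ (not $c_i$) is available. My key observation is that the constant-free \BSS routine ``for $n=1,2,\ldots$ compute $R(\vec x,\phi(n))$; halt as soon as $|R(\vec x,\phi(n))|$ exceeds twice the effective error bound at precision $n$'' \BSS-semidecides the statement $R(\vec x,\vec c)\neq 0$, the effective error bound being furnished by routine interval analysis of the expression~$R$. Its halting-question on input $\vec x$ is therefore a single instance of the \BSS halting problem $\IH$. Hence $\calN^\IH$ can decide whether $R(\vec x,\vec c)=0$: if not, it actually runs the routine to extract an $n$ whose approximation exposes the true sign of $R(\vec x,\vec c)$ and branches accordingly; if so, it takes the equality branch. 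This is precisely the ingredient $\IH$ supplies that a constant-free analytic machine alone cannot.

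With branching faithful to $\calM$, whenever $\calM$ emits its $m$-th output approximation $\vec y_m$---itself a rational-function expression in $(\vec x,\vec c)$---$\calN$ emits the value of that expression with $c_i$ replaced by $\phi_i(n_m)$, where $n_m$ is chosen (again via interval analysis of the expression) so that the substitution error stays below $2^{-m-1}$. Combined with $\|\vec y_m-\vec y\|\leq 2^{-m-1}$ (after a harmless one-step re-indexing of $\calM$) this yields $\|\tilde{\vec y}_m-\vec y\|\leq 2^{-m}$ in the strong case; in the weak case any $n_m\to\infty$ suffices. Non-termination of $\calM$ translates into non-termination of $\calN$ in the obvious way, so $\calN^\IH$ computes exactly the same partial function as $\calM$ for both variants.
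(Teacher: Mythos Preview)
Your proof is correct and follows essentially the same approach as the paper: symbolically track each register of $\calM$ as a rational function $R(\vec x,\vec c)$ with rational coefficients, use the recursiveness of the constants to approximate outputs to any prescribed precision, and resolve each sign test by observing that ``$R(\vec x,\vec c)\neq 0$'' is \BSS-semidecidable (the paper phrases this as semidecidability of both ``$<0$'' and ``$>0$''), so that a single $\IH$-query settles it. Your explicit mention of interval analysis for the error bounds and the re-indexing in the strong case are minor elaborations on what the paper leaves implicit, but the underlying idea is the same.
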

\begin{proof}
Let $c_1,\ldots,c_k\in\IR$ denote the constants of an
analytic machine $\calM$.
Observe that each computation of $\calM$ on
some input $\vec x\in\IR^d$ can be described
as an infinite sequence of elementary operations 
(arithmetic on two intermediate results, 
branch based on testing some intermediate result, 
output of an intermediate result); where each intermediate
result is a rational function
$R(x_1,\ldots,x_d,c_1,\ldots,c_k)$
with rational coefficients.
Now given such an input $\vec x$, let $\calN^\IH$
symbolically record the intermediate calculations 
$R$ performed by $\calM$ up to the first test
``$R(\vec x,\vec c):0$?''
Since $\vec c$ is presumed recursive,
so is $R(\vec x,\vec c)$ and can be output
up to any desired precision whenever
$\calM$ would output $R(\vec x,\vec c)$ exactly.
Another consequence, both ``$R(\vec x,\vec c)<0$'' and
``$R(\vec x,\vec c)>0$'' are semidecidable;
thus by querying $\IH$, $\calN^\IH$ can decide
``$R(\vec x,\vec c)=0$'' and proceed with the
simulation of $\calM$'s control flow accordingly. 
\end{proof}
The following technical tool will be used in the sequel:
\begin{fac} \label{f:Dist}
For a set $S\subseteq\IR^k$, consider
its \aname{distance function}
\begin{equation} \label{e:Distance}
  \dist(\cdot;S):\IR^k\to[0,\infty], 
\quad \vec x\mapsto\inf\big\{\|\vec x-\vec y\|:\vec y\in S\big\} \enspace .
\end{equation}
\begin{enumerate}[a)]
\item
The function $\dist(\cdot;S)$ is always continuous.
\item[b)]
If $S$ is closed, the infimum in Equation~(\ref{e:Distance})
is a minimum, i.e. attained.
\item[c)]
For closed $A,B\subseteq\IR^k$,
it holds $\dist(\vec x;A\cup B)=\min\{\dist(\vec x;A),\dist(\vec x;B)\}$.
\end{enumerate}
\end{fac}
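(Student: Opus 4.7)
All three parts are standard facts about distance functions; the plan is simply to record the standard arguments.

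For part (a), I would establish the stronger statement that $\dist(\cdot;S)$ is 1-Lipschitz (hence continuous). Given $\vec{x},\vec{y}\in\IR^k$ and any $\vec{s}\in S$, the triangle inequality for $\|\cdot\|$ gives $\|\vec{x}-\vec{s}\|\leq\|\vec{x}-\vec{y}\|+\|\vec{y}-\vec{s}\|$. Taking the infimum over $\vec{s}\in S$ on both sides yields $\dist(\vec{x};S)\leq\|\vec{x}-\vec{y}\|+\dist(\vec{y};S)$, and swapping the roles of $\vec{x},\vec{y}$ gives $|\dist(\vec{x};S)-\dist(\vec{y};S)|\leq\|\vec{x}-\vec{y}\|$. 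Continuity is immediate.

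For part (b), assuming $S$ closed and given $\vec{x}\in\IR^k$, I would pick a minimizing sequence $(\vec{y}_n)\subseteq S$ with $\|\vec{x}-\vec{y}_n\|\to\dist(\vec{x};S)$. Since the sequence $(\|\vec{x}-\vec{y}_n\|)$ is bounded, so is $(\vec{y}_n)$; by Bolzano--Weierstrass in $\IR^k$ it admits a convergent subsequence $\vec{y}_{n_j}\to\vec{y}^*$. Closedness of $S$ forces $\vec{y}^*\in S$, and continuity of $\vec{z}\mapsto\|\vec{x}-\vec{z}\|$ gives $\|\vec{x}-\vec{y}^*\|=\lim_j\|\vec{x}-\vec{y}_{n_j}\|=\dist(\vec{x};S)$, so the infimum is attained.

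For part (c), I would just observe that the infimum over a union is the minimum of the infima: for any $\vec{x}$, $\dist(\vec{x};A\cup B)=\inf\{\|\vec{x}-\vec{y}\|:\vec{y}\in A\cup B\}=\min\bigl\{\inf_{\vec{y}\in A}\|\vec{x}-\vec{y}\|,\inf_{\vec{y}\in B}\|\vec{x}-\vec{y}\|\bigr\}=\min\{\dist(\vec{x};A),\dist(\vec{x};B)\}$ (with the $\min$ over the two nonnegative extended reals being well-defined even without closedness; closedness of $A,B$ is only used to ensure the infima are attained, which is not actually needed for the identity itself but fits the framing of the lemma). There is no real obstacle here: all three parts are routine, with the Lipschitz estimate in (a) and the compactness argument in (b) being the only content worth writing out.
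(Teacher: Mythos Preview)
Your proof is correct; the arguments you give (1-Lipschitz via the triangle inequality, Bolzano--Weierstrass for attainment, and the trivial identity for infima over a union) are the standard ones. The paper itself gives no proof at all: it records Fact~\ref{f:Dist} as a folklore ``technical tool'' and moves on, so your write-up actually supplies more detail than the original.
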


\begin{exa} \label{x:DistCantor}
\person{Cantor}'s \aname{Excluded Middle}
set $\Cantor\subseteq[0,1]$ is closed;
its distance function $x\mapsto\dist(x;\Cantor)$
is computable by a strongly analytic machine without constants.
\end{exa}
\begin{proof}
Recall that $\Cantor$ consists of all real numbers $x\in[0,1]$
having a ternary expansion $x=\sum_{n=1}^\infty c_n 3^{-n}$
with $c_n\in\{0,2\}$ and is obviously closed.
\\
Concerning computability of $x\mapsto\dist(x;\Cantor)$,
observe that the sequence of open rational intervals
$I_{\langle m,k\rangle}:=\big((3k+1)/3^m,(3k+2)/3^m\big)$, 
$m\in\IN$, $k=0,\ldots,3^{m-1}-1$
is recursive and exhausts $[0,1]\setminus\Cantor$.
Therefore the sequence $y_m$,
defined by $y_m := \min\{|x-(3k+1)/3^m|,|x-(3k+2)/3^m|\}$
if $x\in I_{\langle m,k\rangle}$ for some $k$
and $y_m:=0$ otherwise, is computable from $x$
and has $\max\{y_1,y_2,\ldots,y_m\}$ converging
to $\dist(x;\Cantor)$ from below.
\\
In order to approximate 
$\dist(x;\Cantor)$ from above, 
observe that the sequence \linebreak
$x_{\bar c}:=(\sum_{n=1}^{|\bar c|} c_n3^{-n})_{_{\bar c\in\{0,2\}^*}}$
(w.r.t. lexicographically ordered index set) is 
recursive and dense in $\Cantor$.
Therefore, $z_m:=\min\{|x-x_{\bar c}|:\bar c\in\{0,2\}^{\leq m}\}$
is computable from $x$ and converges to $\dist(x;\Cantor)$ from above. 
\\
So $y_m\leq\sup_m y_m=\dist(x;\Cantor)=\inf_m z_m\leq z_m$
shows the existence of a subsequence $y_{m_n}$ with
$z_{m_n}-z_{m_n}\leq 2^{-n}$:
this can be sought for computationally
and satisfies $|y_{m_n}-\dist(y;\Cantor)|\leq2^{-n}$
as required.
\end{proof}

\subsection{Topological Complexity of \BSS and Strongly Analytic Computation}
While strongly analytic machines can compute strictly more
(e.g., the exponential function) than \BSS machines,
the present section reveals that the topological complexity
(in the sense of descriptive set theory) does not increase
for decision problems, and increases only slightly for 
function problems.

Recall that, classically, 
a (possibly partial) function $f:\subseteq\IN^*\to\IN^*$
is computable ~iff~ its graph is semidecidable.
The appropriate counterpart for a real function
$f:\subseteq\IR^*\to\IR$ are the \emph{strict epigraph} 
and \emph{strict hypograph} \cite{Emperor}:
\begin{gather*}
\sepigraph(f) \;:=\;
\big\{ (\vec x,y) : \vec x\in\dom(f), y> f(\vec x) \big\} , \\
\shypograph(f) \;:=\;
\big\{ (\vec x,y) : \vec x\in\dom(f), y< f(\vec x) \big\}  \enspace .
\end{gather*}
In the \BSS model, the square root has both 
strict epigraph and strict hypograph decidable
yet is not computable.

\begin{thm} \label{t:Graph}\hfill
\begin{enumerate}[\em a)] 
\item Conversely, if both $\sepigraph(f)$ and $\shypograph(f)$
are semidecidable by a \BSS machine with/out constants, then $f$ is
computable by a strongly analytic machine with/out constants.
\item
And if both $\sepigraph(f)$ and $\shypograph(f)$
are semidecidable \emph{in $\dom(f)$} by a \BSS machine with/out constants,
then \emph{some extension} of $f$ is computable by a strongly analytic machine
with/out constants. 
\item
A set $\IS\subseteq\IR^*$ is decidable by a \BSS machine
with/out constants ~iff~ its characteristic function
$\cf{\IS}:\IR^*\to\{0,1\}$ is computable by a strongly
analytic machine with/out constants.
\item
Every open subset of $\IR^k$ is \BSS-semidecidable.
\item
Every continuous (i.e. $\BorelS_1$-measurable) 
total function $f:\IR^d\to\IR^k$ is computable
by a strongly analytic machine;
\item and so is every countable family $f_\ell:\IR^d\to\IR^k$
($\ell\in\IN$) of continuous total functions.
\end{enumerate}
\end{thm}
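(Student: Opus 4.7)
The strategy for (f) is to encode the entire countable family $(f_\ell)_{\ell \in \IN}$ into a single real constant $c$ and then design a strongly analytic machine that decodes $c$ on demand to approximate $f_\ell(\vec x)$ to any desired precision. Since a strongly analytic machine is allowed finitely many real constants, and a single real can store countably many rationals through its binary expansion, the real work is in choosing \emph{which} discrete data to store: I would record both rational approximations to values and a pointwise modulus of continuity.

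For each triple $(\ell, \vec q, n) \in \IN \times \IQ^d \times \IN$, pick a rational $p_{\ell,\vec q,n} \in \IQ^k$ with $\|p_{\ell,\vec q,n} - f_\ell(\vec q)\| < 2^{-n}$ and the least integer $m_{\ell,\vec q,n}$ such that $\|\vec y-\vec q\| \le 2^{-m_{\ell,\vec q,n}}$ implies $\|f_\ell(\vec y)-f_\ell(\vec q)\| \le 2^{-n}$; both exist by continuity of $f_\ell$ at $\vec q$. Fix bijections $\IN \times \IQ^d \times \IN \leftrightarrow \IN$ and $\IQ^k \times \IN \leftrightarrow \IN$, and arrange the resulting sequence of naturals as self-delimiting binary blocks of a non-dyadic real constant $c \in (0,1)$.

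The machine $\calM$ with constant $c$, on input $(\ell, \vec x)$, produces each approximation $\vec y_n$ as follows. It iterates $M = 1, 2, \ldots$, and for each $M$: determines a dyadic rational $\vec q_M$ with $\|\vec q_M - \vec x\| \le 2^{-M}$ by truncating the coordinates of $\vec x$ through successive exact comparisons against dyadic thresholds (the decoding direction of Example~\ref{x:Pairing}c); extracts $p_{\ell,\vec q_M,n+1}$ and $m_{\ell,\vec q_M,n+1}$ from the prescribed binary block of $c$ using the same digit-extraction technique; and halts the inner loop as soon as $M \ge m_{\ell,\vec q_M,n+1}$, outputting $\vec y_n := p_{\ell,\vec q_M,n+1}$. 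Correctness: once the loop exits, $\|\vec x - \vec q_M\| \le 2^{-m_{\ell,\vec q_M,n+1}}$ forces $\|f_\ell(\vec x) - f_\ell(\vec q_M)\| \le 2^{-n-1}$, so together with $\|p_{\ell,\vec q_M,n+1} - f_\ell(\vec q_M)\| < 2^{-n-1}$ we obtain $\|\vec y_n - f_\ell(\vec x)\| < 2^{-n}$. Termination: by continuity of $f_\ell$ at $\vec x$, pick $\delta > 0$ so that $\|\vec y - \vec x\| < \delta$ implies $\|f_\ell(\vec y) - f_\ell(\vec x)\| < 2^{-n-2}$; once $2^{-M+1} < \delta$, the triangle inequality shows every $\vec y$ with $\|\vec y - \vec q_M\| \le 2^{-M}$ satisfies $\|f_\ell(\vec y) - f_\ell(\vec q_M)\| < 2^{-n-1}$, whence minimality of $m_{\ell,\vec q_M,n+1}$ forces $m_{\ell,\vec q_M,n+1} \le M$.

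The main obstacle is the interaction of exact BSS comparisons with dyadic boundaries, both on the stored constant $c$---where extracting a binary digit via comparison fails at reals with two dyadic expansions---and on the input $\vec x$ when a coordinate happens to be dyadic. I would sidestep both by constructing $c$ to be non-dyadic (for instance by appending a $1$-marker to each block), and by noting that the argument uses only the weak inequality $\|\vec q_M - \vec x\| \le 2^{-M}$, so either admissible truncation of a dyadic coordinate of $\vec x$ works. Part (e) then falls out as the special case of a constant family $f_\ell \equiv f$.
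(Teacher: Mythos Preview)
Your argument is correct, but it differs from the paper's. The paper proves (e) via the Weierstra\ss{} Approximation Theorem: for each compact box $[-m,m]^d$ it stores the rational coefficients of a polynomial $p_{n,m}$ with $\|p_{n,m}-f\|_\infty\le 2^{-n}$ on that box, encodes all of these coefficients into one real constant, and on input $\vec x$ simply finds $m$ with $\vec x\in[-m,m]^d$ and outputs $p_{n,m}(\vec x)$ for $n=1,2,\ldots$; item (f) is then the one-line observation that a triple index $p_{n,m,\ell}$ still gives only countably many rational coefficients to encode. Your route instead stores pointwise rational values $p_{\ell,\vec q,n}\approx f_\ell(\vec q)$ together with local moduli of continuity $m_{\ell,\vec q,n}$, and runs an inner search in $M$ until the current dyadic truncation $\vec q_M$ is close enough for the stored modulus to kick in. Both approaches reduce to the same trick of packing a countable table of rationals into one real constant; the paper's version is shorter because uniform polynomial approximation obviates both the modulus data and the inner loop, while yours is more elementary in that it avoids invoking Weierstra\ss{} and would transfer to settings where polynomial density is unavailable.
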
\vspace{3 pt}
\noindent Note the subtle mismatch between a) and b+c)
with respect to relative/absolute semidecidability,
imposed by Example~\ref{x:Domain}b). Similarly,
Item~f) does not extend to arbitrary partial functions.

We remark also that Item~f), together with Fact~\ref{f:Dist},
yields the distance function of the \aname{Mandelbrot Set} to
be computable by a strongly analytic machine---whereas its
computability in Recursive Analysis is still an open question
(and consequence of the \aname{Hyperbolicity Conjecture})
\cite{Hertling}; yet the \aname{Mandelbrot Set} cannot be
decided by a \BCSS \mycite{Theorem~2.4.2}{BCSS} nor 
(Item~d) by a strongly analytic machine.

\proof (Theorem~\ref{t:Graph})\hfill
\begin{enumerate}[a)]
\item 
Given $(\vec x,z)$ with $\vec x\in\dom(f)$, let the \BSS machine
simulate the strongly analytic computation of $y:=f(\vec x)$
and denote by $(y_n)$ the output sequence it generates,
i.e satisfying $|y_n-y|\leq2^{-n}$.
Now search for some $n$ with $y_n+2^{-n}<z$;
if found, $y<z$ follows and the machine accepts
$(\vec x,z)\in\sepigraph(f)$; and
conversely, in case $(\vec x,z)\in\sepigraph(f)$,
$y<z$ holds and there exists $n$ with $y_n+2^{-n}<z$.
\\
Note that for $\vec x\not\in\dom(f)$, such $n$
may or may not exist; hence the described procedure
semidecides $\sepigraph(f)$ only \emph{in $\dom(f)$}.
\item
Let $\calM$ and $\calN$ denote \BSS machines semideciding
$\sepigraph(f)$ and $\shypograph(f)$, respectively.
In order to approximate $f(\vec x)$ for given $\vec x$
up to error $2^{-n+1}$ simulate, for all $q\in\IQ$ in 
parallel, both $\calM$ on $(\vec x,q+2^{-n})$
and $\calN$ on $(\vec x,q-2^{-n})$.
If both terminate, $q-2^{-n}<f(\vec x)<q+2^{-n}$
justifies to output $q$; and, conversely, 
for $\vec x\in\dom(f)$, 
there is a rational approximation $q$ to $f(\vec x)$
up to error $2^{-n}$ for which both simulations terminate;
whereas for $\vec x\not\in\dom(f)$, both simulations stall
by hypothesis.
\item
Similarly to b), but now the behavior of $\calM$ and $\calN$
on $\vec x\not\in\dom(f)$ is undefined. Hence the search for
approximations may or may not yield a strongly converging
output sequence, i.e. produce $\bot$ or some value $y$ for 
$f(\vec x)$.
\item
If $\IS$ is decidable by a \BSS machine,
its characteristic function is \BSS-computable;
hence strongly analytic, recall Section~\ref{s:Analytic}.
\\
Conversely let $\cf{\IS}$ be computable by a strongly analytic machine $\calM$.
Upon input of $\vec x$, $\calM$ will thus output a sequence 
$(y_n)$ of reals with $|y_n-\cf{\IS}(\vec x)|\leq2^{-n}$.
Since $\cf{\IS}(\vec x)\in\{0,1\}$, this means $y_2$ 
uniquely exhibits whether $\vec x\in\IS$ or $\vec x\not\in\IS$
holds. A \BSS machine thus suffices to simulate $\calM$ for
the finitely many steps it takes to generate $y_2$ and then
output either $0$ or $1$ accordingly.
\item
The open rectangles $(\vec a,\vec b):=\{\vec x:\vec a<\vec x<\vec b\}$ 
with rational corners $\vec a,\vec b$ are well-known to 
form a base of the Euclidean topology on $\IR^k$;
that is, every open $V\subseteq\IR^k$ can be written as a countable union of 
certain such 
open rational rectangles $(\vec a_i,\vec b_i)$. Their corners' coordinates,
being integer fractions, can be encoded in binary into a single real number
\mycite{Lemma~2.3}{Cucker}. Perusing this as a pre-stored constant,
a \BSS machine can, given $\vec x\in\IR^k$, iteratively
extract the coordinates of the open rational rectangles
and search for one to contain $\vec x$.
\item
W.l.o.g. $k=1$.
By the Weierstra\ss{} Approximation Theorem, there exists a
double sequence $p_{n,m}$ of $d$-variate rational
polynomials such that $p_{n,m}$ converges to $f$ as $n\to\infty$
uniformly on $[-m,+m]^d$, w.l.o.g. with uniform error $\leq2^{-n}$.
Now encoding the list of rational coefficients into a real
constant as in d), a \BSS machine can, given $\vec x$ 
first find $m$ with $\vec x\in[-m,+m]^d$ and then
evaluate and output $p_{n,m}(\vec x)$ for $n=1,2,\ldots$.
\item
Similarly to f), encode the (still countable) 
list of rational coefficients of $p_{n,m,\ell}$ 
into a real constant.
\qed\end{enumerate}
%
Note that Item~e) is an extension of \mycite{Proposition~1}{Gaertner2}, where in the current case for the one-dimensional case all inputs in the open interval between two integers have the same computation path.
\begin{cor} \label{c:StrongBorel}\hfill
\begin{enumerate}[\em a)]
\item
Fix $X\subseteq\IR^d$. 
Every set $S\subseteq X$ 
\BSS-semidecidable in $X$
belongs to the Borel class $\BorelS_2(X)$.
\item
Every $S\subseteq X$ decidable in $X$ by a strongly analytic machine
belongs to $\BorelD_2(X)$.
\item
Each function $f:\subseteq\IR^d\to\IR$
computable by a strongly analytic machine 
is $\BorelS_2$-measurable in $\dom(f)$.
\item
For $f:\subseteq\IR^d\to\IR$
computable by a strongly analytic machine,
it holds $\dom(f)\in\BorelP_3(\IR^d)$.
\end{enumerate}
\end{cor}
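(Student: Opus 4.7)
The plan is to prove (a) directly from the structure of BSS computations and then to reduce (b), (c), (d) to (a) by elementary manipulations inside the Borel hierarchy.

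For (a), I would invoke the standard \emph{path decomposition} of a BSS machine acting on $\IR^d$: each of its countably many computation paths accepts exactly those inputs satisfying a finite conjunction of polynomial equalities, nonequalities, and strict inequalities, and is therefore of the form $F\cap U$ with $F\subseteq\IR^d$ closed and $U\subseteq\IR^d$ open. Since $\IR^d$ is metric, $U$ is itself $\Fsigma$, so $F\cap U$ is $\Fsigma$, i.e.\ in $\BorelS_2(\IR^d)$. The full accept set is a countable union of such sets and remains in $\BorelS_2(\IR^d)$; intersecting with $X$ (subspace topology) gives $S\in\BorelS_2(X)$.

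Items (b) and (c) reduce neatly to (a). For (b), Theorem~\ref{t:Graph}c) (or directly: rounding the approximation $y_2(x)$ of a $\{0,1\}$-valued output to the nearer of $0$ or $1$) shows that a strongly analytic decider in $X$ already yields a BSS decider in $X$; applying (a) both to $S$ and to $X\setminus S$ places both sets in $\BorelS_2(X)$, whence $S\in\BorelD_2(X)$. For (c), since every open subset of $\IR$ is a countable union of rational intervals and $\BorelS_2$ is closed under countable unions, it suffices to verify $f^{-1}[(a,b)]\in\BorelS_2(\dom(f))$ for each rational pair $a<b$. But a BSS machine can simulate the strongly analytic computation of $f(x)$ and accept as soon as some approximation satisfies $a+2^{-n}<y_n(x)<b-2^{-n}$; this semidecides $f^{-1}[(a,b)]$ in $\dom(f)$, and (a) finishes the job.

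For (d) the reformulation is via Equation~(\ref{e:Variation}): $x\in\dom(f)$ precisely when (i) the machine produces an $n$-th output $y_n(x)$ for every $n\in\IN$ and (ii) $|y_n(x)-y_m(x)|\le 2^{-n}+2^{-m}$ holds for all $n,m$. The predicate ``the $n$-th output is produced'' is BSS-semidecidable (run the simulation step-by-step until $n$ approximations have been emitted), so by (a) its acceptance set lies in $\BorelS_2(\IR^d)$; the countable intersection over $n\in\IN$ of these sets defines $D\in\BorelP_3(\IR^d)$, by the very definition of $\BorelP_3$ as countable intersections of $\BorelS_2$-sets. I would then phrase condition (ii) via its \emph{failure}: ``there exist $n,m$ with both $y_n(x),y_m(x)$ already produced and $|y_n(x)-y_m(x)|>2^{-n}+2^{-m}$.'' This existential statement is BSS-semidecidable, so its acceptance set $F$ is in $\BorelS_2(\IR^d)$ by (a), and its complement in $\BorelP_2\subseteq\BorelP_3$; the intersection $D\cap(\IR^d\setminus F)$ equals $\dom(f)$ and lies in $\BorelP_3(\IR^d)$. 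The main subtlety is this last step: phrasing the Cauchy-style condition existentially (rather than universally) is what keeps its failure semidecidable and allows (a) to be applied, while the classification of $D$ relies on the precise level $\BorelP_3$ as countable intersections of $\BorelS_2$-sets.
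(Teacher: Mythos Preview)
Your proof is correct and follows essentially the same route as the paper: path decomposition for (a), reduction to (a) via Theorem~\ref{t:Graph}c) for (b), semideciding preimages of rational intervals for (c), and the Cauchy-type characterisation Equation~(\ref{e:Variation}) for (d). The only cosmetic difference is in (d): the paper bundles ``at least $\max(n,m)$ outputs have been produced \emph{and} $|y_n-y_m|\le 2^{-n}+2^{-m}$'' into a single \BSS-semidecidable predicate in $(\vec x,n,m)$ and then takes the countable intersection over $(n,m)$, arriving at $\BorelP_3$ in one step, whereas you split this into the two pieces $D$ and $\IR^d\setminus F$ and intersect; both arguments are sound and yield the same level.
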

Compare also \mycite{Section~4}{Cucker}.

\proof\hfill
\begin{enumerate}[a)]
\item
Note that every semialgebraic set is (a finite union
of basic semialgebraic sets and thus) the intersection
of a closed and an open set \cite[p.51 l.6]{BCSS}
and in particular in $\BorelS_2$, which is closed under countable unions. 
Now the \aname{Path Decomposition Theorem} for \BSS machines
\mycite{Theorem~2.3.1}{BCSS} shows that every \BSS-semidecidable
set (in $X$) is the countable union of semialgebraic sets (intersected with $X$).
\item
Decidability of $S$ means that both $S$ and its
complement are semidecidable, hence in $\BorelS_2$ by a);
that means $S\in\BorelS_2\cap\BorelP_2=\BorelD_2$.
\item
Let $V\subseteq\IR$ be open, $V=\bigcup_{n} (a_n,b_n)$ 
with $a_n,b_n\in\IQ$ enumerated by a \BSS machine
according to the proof of Theorem~\ref{t:Graph}e).
Now $f^{-1}[(a,b)]=f^{-1}\big[(a,\infty)\big]
\cap f^{-1}\big[(-\infty,b)\big]$
is semidecidable in $\dom(f)$ according to Theorem~\ref{t:Graph}a);
hence so is $f^{-1}[V]=\bigcup_n f^{-1}[(a_n,b_n)]$;
and thus $\BorelS_2$ in view of a).
\item
Let $\calM$ denote a strongly analytic machine computing $f:\subseteq\IR^*\to\IR$.
According to Equation~(\ref{e:Variation}), $\vec x\in\dom(f)$
~iff~ for all $n,m\in\IN$, $(\vec x,n,m)$ belongs to the set
\[ \big\{(\vec x,n,m):\calM \text{ on input $\vec x$ prints
$y_1,\ldots,y_n,\ldots,y_m$ with } |y_n-y_m|\leq2^{-n}+2^{-m} \big\} \]
which is clearly \BSS semidecidable and thus $\BorelS_2$ according
to a). Adding the universal quantification over $n,m$,
it follows that $\dom(f)$ is $\BorelP_3$.
\qed\end{enumerate}
The following result 
due to \person{Arno Pauly} (personal communication)
exhibits the topological difference between functions computable
by \BCSS machines and by strongly analytical ones,
recall Corollary~\ref{c:StrongBorel}c).

\begin{thm} \label{t:Arno}\hfill
\begin{enumerate}[\em a)]
\item
Every (possibly partial) function 
$f:\subseteq\IR^d\to\IR$ 
computable by a \BCSS machine is $\BorelD_2$-measurable
in $\dom(f)$.
\item
More generally,
let $\calF$ denote a family of continuous, partial real functions
$f:\subseteq\IR^{d_f}\to\IR$ of various arities $d_f\in\IN$ with
domains in $\BorelD_2$.
Let $\calR$ denote a family of 
$\BorelD_2$-measurable real relations $R\subseteq\IR^{k_R}$
of various arities $k_R\in\IN$.
Consider a uniform machine model over the structure
\footnote{We refrain from formally defining the intuitive 
but tedious concept of a (nonuniform) machine model over a
structure but refer, e.g., to
\mycite{\textsection 4.A}{Poizat} 
(which technically 
restricts to structures having only total functions);
cmp. also \mycite{\textsection 3}{Zucker}} 
$(\IR,\calF,\calR)$,
i.e. capable of performing a finite sequence of operations from 
$\calF$ and branchings based on tests from $\calR$.
Then any (possibly partial) function $g:\subseteq\IR^d\to\IR^k$
computable by such a machine is necessarily $\BorelD_2$-measurable
in $\dom(f)$.
\end{enumerate}
\end{thm}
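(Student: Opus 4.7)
\medskip

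The plan is to treat (a) as the special case of (b) corresponding to the BSS structure $(\IR,\{+,-,\times,\div\},\{=,<\})$, where the operations are continuous on open (hence $\BorelD_2$) subsets and the tests are semialgebraic (again $\BorelD_2$). So the substance is in (b), and I would prove it by a path-decomposition argument generalizing the BSS Path Decomposition Theorem \mycite{Theorem~2.3.1}{BCSS}.

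First I would introduce the computation tree of the machine: nodes are configurations (program counter plus list of intermediate results), children of an operation node are obtained by applying the prescribed $\phi\in\calF$ to specified registers, and children of a test node are indexed by the possible outcomes of the queried $R\in\calR$. Since $\calF,\calR$ have fixed finite arities and each test has finitely many outcomes, the tree is finitely branching, so the set $\Pi$ of \emph{finite} root-to-halt paths is countable. For each $\pi\in\Pi$, let $A_\pi\subseteq\IR^d$ be the set of inputs that cause the machine to follow exactly $\pi$: it is the finite intersection of (i) the conditions ensuring the successive operations along $\pi$ are defined and (ii) the conditions that the tests along $\pi$ yield the prescribed outcomes.

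The key inductive claim along $\pi$ is: after $n$ steps the current register contents are given by a continuous map $\Phi_n^\pi:A_n^\pi\to\IR^{r_n}$ with $A_n^\pi\in\BorelD_2(\IR^d)$. The induction step composes $\Phi_n^\pi$ with a continuous partial $\phi\in\calF$ whose domain $\dom(\phi)\in\BorelD_2$, or intersects $A_n^\pi$ with the ``true/false'' fibre of some $R\in\calR$ pulled back through $\Phi_n^\pi$; in both cases one uses that continuous preimages of $\BorelD_2$ sets are $\BorelD_2$ in $A_n^\pi$, and that $\BorelD_2$ is closed under finite Boolean operations, so $A_{n+1}^\pi\in\BorelD_2(\IR^d)$. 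Hence each $A_\pi\in\BorelD_2(\IR^d)$ and $g|_{A_\pi}$ is continuous. Because different paths enforce incompatible test outcomes, $\dom(g)=\bigsqcup_{\pi\in\Pi}A_\pi$ is a countable disjoint union.

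Finally, for any open $V\subseteq\IR^k$, both $g^{-1}[V]=\bigcup_\pi\bigl(A_\pi\cap g|_{A_\pi}^{-1}[V]\bigr)$ and $g^{-1}[\IR^k\setminus V]=\bigcup_\pi\bigl(A_\pi\cap g|_{A_\pi}^{-1}[\IR^k\setminus V]\bigr)$ are countable unions of $\BorelD_2(\IR^d)$ sets, hence lie in $\BorelS_2(\IR^d)$; being mutual complements inside $\dom(g)$, each is additionally $\BorelP_2$ in $\dom(g)$, placing $g^{-1}[V]\in\BorelD_2(\dom(g))$, as required. The main obstacle I expect is the inductive bookkeeping in the middle step, specifically verifying stability of $\BorelD_2$ under the composition of a continuous partial $\phi\in\calF$ with a previously-built continuous partial state map whose domain is only $\BorelD_2$ rather than open; but this reduces cleanly to the two facts that continuous preimages preserve $\BorelS_k$ and $\BorelP_k$ separately (Fact~\ref{f:Borel}c), and that $\BorelD_2$ is a Boolean algebra.
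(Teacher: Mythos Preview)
Your proposal is correct and follows essentially the same approach as the paper: reduce (a) to (b) via the BSS structure, unroll the computation into a countable tree of paths, show by induction along each path that the intermediate state map is continuous on a $\BorelD_2$ domain (the paper phrases this as ``for continuous $f_1,f_2$ with $\BorelD_2$ domains, $\dom(f_2\circ f_1)$ is again $\BorelD_2$''), and conclude by writing $g^{-1}[V]$ and $g^{-1}[\IR^k\setminus V]$ as countable unions of $\BorelD_2$ pieces, hence both $\BorelS_2$, hence mutually $\BorelD_2$ in $\dom(g)$. Your inductive bookkeeping is somewhat more explicit than the paper's, but the substance is identical.
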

\proof\hfill
\begin{enumerate}[a)]
\item
follows from b) with 
$\calF:=\{+,-,\times,\div\}$ and $\calR:=\{=,<\}$.
\item
Similarly to the proof of \mycite{Theorem~3.3.1}{BCSS},
the computation of such a machine can be unrolled into a
(possibly infinite) binary tree $\calT$: Each internal node
$u$ describes the branching based on the outcome of a test 
``$\vec y\in R_u$?'' ($R_u\in\calF$) of intermediate results $\vec y$;
intermediate results which arise from the input $\vec x$
evaluated on functions $g_u$ which are compositions 
of functions from $\calF$.
In particular, for a leaf $v$ of $\calT$, the set $G_v$
of inputs $\vec x\in\IR^d$ ending up in $v$ is the intersection 
$G_v=\bigcap\nolimits_u \big\{\vec x:g_u(\vec x)\in R_u\big\}$
with $u$ running over the (finitely many)
internal nodes from $\calT$'s root to $v$;
and the output printed in $v$ is of the form $g_v(\vec x)$.
This yields a disjoint decomposition
$g=\biguplus\nolimits_v g_v|_{G_v}$,
now with $v$ running over all (countably many) leaves of $\calT$.
In particular, $g^{-1}[Y]=\biguplus_v(g_v^{-1}[Y]\cap G_v)$
holds for any $Y\subseteq\IR^k$.
\\
Now by hypothesis, $g_v$ is continuous as the composition of
continuous functions; and for continuous $f_1,f_2$ with 
$\dom(f_1)$ and $\dom(f_2)$ both $\BorelD_2$-measurable,
$\dom(f_2\circ f_1)=\{\vec x:f_1(\vec x)\in\dom(f_2)\}$
is easily verified to be again $\BorelD_2$-measurable:
recall that $\BorelD_2$ is closed under both finite unions
and finite intersections. Similarly, it follows that each
$G_v\subseteq\IR^d$ is $\BorelD_2$-measurable as well.
Thus, both for open and for closed $Y\subseteq\IR^k$,
$g^{-1}[Y]$ is a countable union of $\BorelS_2$ sets.
\qed\end{enumerate}
Corollary~\ref{c:StrongBorel},
Theorem~\ref{t:Graph}e),
and Theorem~\ref{t:Arno}a)
are (almost) best possible:

\begin{exas} \label{x:BssBorel}\hfill
\begin{enumerate}[a)]
\item
The set $\IQ$ of rational numbers
is \BSS-semidecidable (but not in $\BorelP_2$).
\item
The characteristic function $\cf{[0,1)}:\IR\to\{0,1\}$
is \BSS-computable but is not $\BorelS_1\cup\BorelP_1$-measurable.
\item
\person{Cantor}'s \aname{Excluded Middle}
set $\Cantor\subseteq[0,1]$ belongs to $\BorelP_1\subseteq\BorelS_2$,
but is not \BCSS semidecidable.
\item
Recall \person{Thomae}'s or \aname{Popcorn Function} $h:\IR\to\IR$, 
defined as  $h(x)=0$ for $x\in\IR\setminus\IQ$
and $h(\pm p/q)=1/q$ for coprime $p,q\in\IN$, $h(0)=1$. \\
This function is computable by a strongly analytic machine
but is not $\BorelP_2$-measurable.
\item
There is a function $f:\subseteq\IR^2\to\{0\}$ computable by a strongly analytic machine
with $\dom(f)=\IQ\times(\IR\setminus\IQ)$ not $\BorelS_2\cup\BorelP_2$-measurable.
\end{enumerate}
\end{exas}
\proof\hfill
\begin{enumerate}[a)]
\item
A \BSS machine can, given $x\in\IR$,
enumerate all pairs $r,s\in\IZ$ and
compare $x=r/s$ to semidecide ``$x\in\IQ$''.
\item
The set $[0,1)$ is neither closed nor open, hence its characteristic
function is not $\BorelS_1\cup\BorelP_1$-measurable.
\item
Note that each singleton $\{x\}\subseteq\Cantor$ is a
connected component of $\Cantor$ of its own. Hence
$\Cantor$ has uncountably many connected components;
whereas any \BSS-semidecidable
set, being a countable union of semialgebraic sets
(recall the proof of Corollary~\ref{c:StrongBorel}a)
of only finitely many connected components each
\mycite{Section~5.2}{Basu}, can have at most countably many 
connected components.
\item
Recall from a) that $\IQ$ is not in $\BorelP_2$
but the preimage of an open set: $\IQ=h^{-1}[(0,2)]$.
We now describe a machine computing $h(x)$ on input $x>0$:
\\
Iteratively for $q=1,2,3,\ldots$ test whether $q\cdot x$ is an integer;
if not, output $q$ as approximation to $h(x)$ up to error $1/q$
and continue with the iteration; otherwise switch to outputting 
$1/q,1/q,1/q,\ldots$ as approximations to $h(x)$ up to error $1/m$ 
for all $m\geq q$.
\\
It is easy to convert this sequence $(y_q)_{_q}$ of approximations
up to error $1/q$ into a sequence $(z_n)_{_n}$ of approximations
up to error $2^{-n}$ by printing only the subsequence $(y_{2^n})_{_n}$.
\item
Consider a machine which, given $(x,y)$, first searches
(without output) for $p,q\in\IZ$ with $x=p/q$. When found, 
it starts similarly enumerating each $r_n\in\IQ$ and
printing $2^{-n}$ until (and if) arriving at one with $r_n=y$.
\qed\end{enumerate}
The rough conclusion of this subsection is that
both \BSS model and analytic machines lie slightly skew to the Borel Hierarchy,
having topological power strictly between $\BorelS_1$ and $\BorelS_2$;
and partial functions are even more skewed relative to the hierarchy.

\subsection{Composition of Strongly Analytic Machines} \label{s:Composition}
The analytic machine models presume the input to be given exactly
but produce only approximations to the output. It is thus reasonable
to expect that the composition of two functions computable by analytic
machines in general need not itself be computable by an analytic machine.
This has been proven for weakly analytic machines
in \mycite{Lemma~6}{Chadzelek}; cmp. \mycite{Corollary~2.4}{Gaertner}.
It is not surprising that we can establish the same
for strongly analytic machines in Proposition~\ref{p:Vassilis}b) below.
However the use of descriptive set theory is of interest because of
the new perspective it provides:
It is well-known that the composition 
of two $\BorelS_2$-measurable functions is in general 
no more than $\BorelS_3$-measurable \mycite{Corollary~3.9}{EffBorel};
whereas the composition of
a $\BorelS_2$-measurable function with a continuous one is
again $\BorelS_2$-measurable.
In view of Theorem~\ref{t:Graph}f) and Corollary~\ref{c:StrongBorel}c),
\person{Vassilios Gregoriades} (personal communication)
thus raised the natural question of whether the composition of
a strongly analytic and a continuous function is again strongly analytic.
A complete answer is given by the already mentioned

\begin{prop} \label{p:Vassilis}\hfill
\begin{enumerate}[\em a)]
\item
Let $g:\IR^k\to\IR^\ell$ be computable by a strongly analytic machine
and let $h:\IR^d\to\IR^k$ denote a continuous function.
Then $h\circ g$ is computable by a strongly analytic machine.
\item
There exists a function 
$g:\IR\to\IR$ and a continuous function $f:\IR\to\IR$,
both computable by strongly analytic machines without constants,
such that $g\circ f$ is not computable by a strongly analytic machine.
\end{enumerate}
\end{prop}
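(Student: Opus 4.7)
For part (a), the plan is to lift the Weierstra\ss{} polynomial machinery from the proof of Theorem~\ref{t:Graph}f). Given exact input $\vec x\in\IR^k$, I would first simulate the strongly analytic machine for $g$ to generate a sequence $\vec y_n$ in the intermediate space with $\|\vec y_n-g(\vec x)\|\le 2^{-n}$. Theorem~\ref{t:Graph}f), applied to $h$, supplies a double sequence of rational polynomials $p_{n,m}$ whose coefficients can be encoded in a single pre-stored real constant and which satisfy $\|p_{n,m}(\vec z)-h(\vec z)\|\le 2^{-n}$ uniformly on $[-m,m]^d$.

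To produce the $N$-th approximation $\vec z_N$ of $h(g(\vec x))$, I would scan pairs $(k,m)$ until one satisfies $\|\vec y_k\|_\infty+2^{-k}<m$, which must happen eventually because $g(\vec x)\in\IR^d$ is finite; this guarantees that both $g(\vec x)$ and every subsequent $\vec y_{k'}$ with $k'\ge k$ lie in $[-m,m]^d$. From the known rational coefficients of $p_{N+2,m}$, the machine then computes an explicit Lipschitz bound $L$ for this polynomial on the box---a routine expression in $m$ and the coefficients---picks $k'\ge k$ with $L\cdot 2^{-k'}\le 2^{-(N+2)}$, and outputs $\vec z_N:=p_{N+2,m}(\vec y_{k'})$. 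The triangle inequality
\[
\|\vec z_N-h(g(\vec x))\|\;\le\;L\cdot\|\vec y_{k'}-g(\vec x)\|\,+\,2^{-(N+2)}\;\le\;2^{-(N+1)}\;<\;2^{-N}
\]
then certifies strong convergence, and every step fits into a strongly analytic machine with the polynomial coefficients hard-coded as a single real constant, exactly as in Theorem~\ref{t:Graph}f).

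For part (b), I would exhibit the composition as the indicator of a closed set failing to be \BSS-decidable. Take $f(x):=\dist(x;\Cantor)$, which is continuous by Fact~\ref{f:Dist}a) and strongly analytic without constants by Example~\ref{x:DistCantor}. Let $g:\IR\to\IR$ be the indicator of $\{0\}$: a strongly analytic machine without constants computes it by a single \BSS comparison ``$x=0$?'' followed by an eternal output stream of $1$s or $0$s. Because $\Cantor$ is closed, $f(x)=0$ iff $x\in\Cantor$, so $g\circ f=\cf{\Cantor}$. By Theorem~\ref{t:Graph}c), any strongly analytic computation of $\cf{\Cantor}$ would entail \BSS-decidability of $\Cantor$, contradicting Example~\ref{x:BssBorel}c), which shows that $\Cantor$ is not even \BSS-semidecidable.

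The main obstacle in part (a) is that mere continuity of $h$ furnishes no effective modulus of continuity at the unknown point $g(\vec x)$, so approximations to $g(\vec x)$ cannot a priori be converted into approximations to $h(g(\vec x))$ at a controlled rate. The Weierstra\ss{} polynomial layer circumvents this by delivering simultaneously a uniform error bound on each compact rational box and an effectively computable Lipschitz constant for the approximating polynomial, which together allow strong convergence to be propagated cleanly through $h$.
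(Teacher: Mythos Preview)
Your proof of part~(b) is essentially identical to the paper's: both take $f=\dist(\cdot,\Cantor)$ and $g=\cf{\{0\}}$, observe $g\circ f=\cf{\Cantor}$, and conclude via Example~\ref{x:BssBorel}c) (together with Theorem~\ref{t:Graph}c)) that this characteristic function is not strongly analytically computable.

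For part~(a) your argument is correct but handles the modulus-of-continuity issue differently from the paper. The paper encodes, alongside the Weierstra\ss{} coefficients, explicit integers $\mu_{n,m}$ serving as moduli of uniform continuity for $h$ itself on each box $[-m,m]^d$ (these exist by compactness and are packed into the same real constant); it then requests precision $2^{-\mu_{n+1,m+1}}$ from the $g$-machine and evaluates $p_{n+1,m+1}$ at the resulting approximation, bounding the error by $|h(\vec y)-h(\vec y')|+|h(\vec y')-p_{n+1,m+1}(\vec y')|$. You instead bypass any stored modulus for $h$ by exploiting that a Lipschitz constant for the \emph{polynomial} approximant $p_{N+2,m}$ on the box is effectively computable from its already-stored rational coefficients and the box parameter~$m$, splitting the error instead as $|p_{N+2,m}(\vec y_{k'})-p_{N+2,m}(\vec y)|+|p_{N+2,m}(\vec y)-h(\vec y)|$. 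Your route is marginally more economical with stored data and makes the effectivity of the error control explicit; the paper's route is shorter because it offloads the continuity bookkeeping into the pre-stored constant. One minor quantifier slip: from $\|\vec y_k\|_\infty+2^{-k}<m$ you get $g(\vec x)\in(-m,m)^d$, but a later $\vec y_{k'}$ is only guaranteed to lie in $[-m-2^{-k'},\,m+2^{-k'}]^d$; this is harmless, since you may either tighten the test to $\|\vec y_k\|_\infty+2^{-k+1}<m$ or simply compute the polynomial's Lipschitz bound on the slightly larger box.
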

\noindent In particular, we obtain:
\begin{cor}
The class of total real functions 
computable by strongly analytic machines 
is not closed under composition.
\end{cor}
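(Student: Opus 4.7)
The plan is to derive the corollary as an immediate consequence of Proposition~\ref{p:Vassilis}b). That part of the proposition exhibits a specific pair of functions $g\colon\IR\to\IR$ and $f\colon\IR\to\IR$ with the following three properties simultaneously: $g$ is total and computable by a strongly analytic machine; $f$ is total, continuous, and computable by a strongly analytic machine (in fact, its continuity together with Theorem~\ref{t:Graph}e) already guarantees strong analytic computability, but the proposition states this directly); and the composite $g\circ f$ fails to be computable by any strongly analytic machine.

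Given this, the proof reduces to the observation that $f$ and $g$ are both witnesses for membership in the class $\calC$ of total real functions computable by strongly analytic machines, while $g\circ f\notin\calC$ by the negative clause of Proposition~\ref{p:Vassilis}b). Hence $\calC$ is not closed under composition, which is exactly the content of the corollary.

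There is no real obstacle here: the substantive work is entirely absorbed into Proposition~\ref{p:Vassilis}b), and the corollary is obtained by specialising from the mixed statement ``strongly analytic composed with continuous'' to the homogeneous statement ``strongly analytic composed with strongly analytic,'' which is legitimate because continuous total functions form a subclass of $\calC$. Thus I would simply write one or two sentences invoking Proposition~\ref{p:Vassilis}b) and noting that both constituent functions belong to $\calC$ while their composition does not.
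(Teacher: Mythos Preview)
Your proposal is correct and matches the paper's approach exactly: the paper states the corollary with the preface ``In particular, we obtain'' immediately after Proposition~\ref{p:Vassilis}, giving no separate proof, since part~b) already furnishes total strongly analytic $f$ and $g$ with $g\circ f$ not strongly analytic. Your observation that the continuity of $f$ is not needed here (only its membership in the class) is precisely the specialisation the paper leaves implicit.
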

\proof (Prop.~\ref{p:Vassilis})\hfill
\begin{enumerate}[a)]
\item
We refine the proof of Theorem~\ref{t:Graph}f)
by storing, in addition to 
the coefficients of rational polynomials $p_{n,m}$ approximating
$h|_{[-m,+m]^d}$ up to error $2^{-n}$ 
also some moduli of uniform continuity, that is, 
integers $\mu_{n,m}$ subject to:
\begin{equation} \label{e:Moduli}
\vec y,\vec y'\in[-m,+m]^d, \quad |\vec y-\vec y'|\leq2^{-\mu_{n,m}}
\qquad\Longrightarrow\quad |h(\vec y)-h(\vec y')|\leq2^{-n} \enspace . 
\end{equation}
Now, given $\vec x$ and a desired precision $2^{-n}$, 
determine $m$ with $\vec x\in[-m,+m]^d$ and evaluate $\vec y:=g(\vec x)$:
By hypothesis, the strongly analytic machine computing $g$ can produce
an approximation $\vec y'$ up to precision $2^{-\mu_{n+1,m+1}}$.
Finally output $z:=p_{n+1,m+1}(\vec y')$ and verify
\[ |h\circ g(\vec x)-z| \quad\leq\quad
 |h(\vec y)-h(\vec y')|\;+\;|h(\vec y')-z|
\quad\overset{\text{(\ref{e:Moduli})}}{\leq}\quad 
 2^{-n-1}+2^{-n-1} \enspace . \]
\item
Let $g(0):=1$ and $g(y):=0$ for $y\neq0$ denote the characteristic function
of $\{0\}$. Let $f:=\dist(\cdot,\Cantor)$ denote the distance function of the
Cantor set, recall Example~\ref{x:DistCantor}.
Since $\Cantor$ is closed, it follows $g\circ f=\cf{\Cantor}$:
a function not computable by a strongly analytic machine
according to Example~\ref{x:BssBorel}c).
\qed\end{enumerate}
We now extend Theorem~\ref{t:Graph}f):

\begin{thm} \label{t:Composition}
Every $\BorelS_2$-measurable $f:\IR^d\to\IR$ 
can be expressed as the composition $f=h\circ g$ of
a strongly analytic $g:\IR^d\to\IR^\omega$ and a strongly
analytic partial function $h:\subseteq\IR^\omega\to\IR$.
\end{thm}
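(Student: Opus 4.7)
The plan is to route $f$ through a countable family of distance functions: $g$ emits the distances (all continuous in $\vec x$) and $h$ decodes $f(\vec x)$ from the pattern of exact zeros among them. Concretely, for each rational $q$ the $\BorelS_2$-measurability of $f$ gives $F_\sigma$-decompositions $\{f<q\}=\bigcup_n F^{<,q}_n$ and $\{f>q\}=\bigcup_n F^{>,q}_n$ into closed sets, and I set $\psi^{*,q}_n:=\dist(\cdot;F^{*,q}_n)$ for $*\in\{<,>\}$; each $\psi^{*,q}_n$ is $1$-Lipschitz continuous by Fact~\ref{f:Dist}a). I then define
\[
g(\vec x) \;:=\; \bigl(\psi^{<,q}_n(\vec x),\,\psi^{>,q}_n(\vec x)\bigr)_{q\in\IQ,\,n\in\IN}\;\in\;\IR^\omega
\]
(under any fixed bijection $\IQ\times\IN\cong\IN$). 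Being a countable family of continuous total functions, $g$ is strongly analytic computable by Theorem~\ref{t:Graph}f), with the requisite rational-polynomial approximations packaged into a single real constant.

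The partial function $h:\subseteq\IR^\omega\to\IR$ will be defined by letting $h(\vec\alpha)$ be the unique real $y$ (if such exists) that satisfies, for every $q\in\IQ$,
\[
y<q \;\Longleftrightarrow\; \exists n:\alpha^{<,q}_n=0,
\qquad
y>q \;\Longleftrightarrow\; \exists n:\alpha^{>,q}_n=0.
\]
For $\vec\alpha=g(\vec x)$, the value $y:=f(\vec x)$ satisfies both equivalences---each $F^{*,q}_n$ being closed, $\psi^{*,q}_n(\vec x)=0$ iff $\vec x\in F^{*,q}_n$---and $y$ is unique by density of $\IQ$. Hence $g(\IR^d)\subseteq\dom(h)$ and $h\circ g=f$ as functions.

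It remains to show that (an extension of) $h$ is strongly analytic, for which I invoke Theorem~\ref{t:Graph}b). Unfolding the defining equivalences,
\[
z>h(\vec\alpha) \;\Longleftrightarrow\; \exists q\in\IQ,\,\exists n:\;q<z\;\wedge\;\alpha^{<,q}_n=0,
\]
which is plainly \BSS-semidecidable in $\dom(h)$ by dovetailing an enumeration of $(q,n)$ with exact zero-tests on the (exactly read) coordinate $\alpha^{<,q}_n$; the analogous description semidecides $\shypograph(h)$ in $\dom(h)$. Theorem~\ref{t:Graph}b) then yields a strongly analytic machine computing some extension $\tilde h\supseteq h$, and since $g(\IR^d)\subseteq\dom(h)$ we conclude $\tilde h\circ g=h\circ g=f$. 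The most delicate point I anticipate is bookkeeping the $\IR^\omega$-interface---ensuring the output coordinates produced by Theorem~\ref{t:Graph}f) for $g$ match the input coordinates on which $h$'s BSS-style exact zero-tests are performed---but once the indexing is pinned down, the decomposition closes cleanly.
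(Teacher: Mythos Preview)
Your proposal is correct and follows essentially the same approach as the paper: encode $\vec x$ via distances to the closed pieces of an $F_\sigma$-decomposition of sublevel sets, then recover $f(\vec x)$ from the zero-pattern. The only cosmetic differences are that the paper uses preimages of bounded rational intervals $f^{-1}[(q,p)]$ rather than your half-lines $\{f<q\}$, $\{f>q\}$, and that the paper spells out the search procedure for $h$ directly (find $(q,p,k)$ with $\delta(q,p,k)=0$ and $p-q\le 2^{-n}$) instead of delegating to Theorem~\ref{t:Graph}b); both routes amount to the same dovetailed search for exact zeros among the input coordinates, and both rely on the same silent extension of the machine model to $\IR^\omega$.
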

%

\proof
Being $\BorelS_2$-measurable means 
\begin{equation} 
f^{-1}\big[(q,p)\big] \;=\; \bigcup\nolimits_k A_{k,q,p},
\quad\text{ for }
q,p\in\IQ \; \text{ and closed } A_{k,q,p}\subseteq\IR^d \enspace .
\end{equation}
\begin{iteMize}{$\bullet$}
\item
Since $Q:=\{(q,p,k):k\in\IN,q,p\in\IQ,q<p\}$ is countable,
Theorem~\ref{t:Graph}g) yields
a strongly analytic machine computing the function
\[ \hat g:Q\times\IR^d \;\ni\; 
\big((q,p,k),\vec x\big)\;\mapsto \;
\dist(\vec x,A_{k,q,p})\;\in\;[0,\infty) \]
which we shall identify with the function 
$g:\IR^d\to\IR^Q$, 
$\vec x\mapsto\big((q,p,k)\mapsto\hat g(q,p,k,\vec x)\big)$.
Note that $\exists k:\hat g(q,p,k,\vec x)=0\Leftrightarrow q<f(\vec x)<p$.
\item
Now consider the function 
\begin{multline*}
h:\subseteq\IR^Q\to\IR, \quad
\delta\mapsto\sup\{q:\exists k,p:\delta(q,p,k)=0\}
\quad\text{with}\quad \dom(h)\;\;:=\;\;\\
\big\{\delta:Q\to[0,\infty)\;\big|\;
\sup\{q:\exists k,p:\delta(q,p,k)=0\}=\inf\{p:\exists k,q:\delta(q,p,k)=0\}\big\} 
\end{multline*}
and observe that 
$\delta:=g(\vec x)$ has
$\sup\{q:\exists k,p:\delta(q,p,k)=0\}=f(\vec x)=
\inf\{p:\exists k,q:\delta(q,p,k)=0\}$; hence
$\dom(h)\subseteq\range(g)$ holds
and, moreover, $(h\circ g)(\vec x)=f(\vec x)$.
\item
Finally, $h$ is computable by a strongly analytic machine:
Given $\delta\in\dom(h)$ and for each $n\in\IN$, search for
$q,p,k$ with $\delta(q,p,k)=0$
and $p-q\leq2^{-n}$ and, when found, print $q$, then continue
with $n+1$.
On the one hand such $(q,p,k)$ exist 
because, according to the hypothesis $\delta\in\dom(h)$, it holds
$\sup\{q:\exists k,p:\delta(q,p,k)=0\}=\inf\{p:\exists k,q:\delta(q,p,k)=0\}\big\}
=h(\delta)=:y$.
On the other hand such a tuple satisfies
$q<y<p\leq q+2^{-n}$, hence the output
sequence converges effectively to this $y$.
\qed\end{iteMize}
Note that we have silently extended the classical analytic machine
model to infinite dimensional arguments and values---which raises

\begin{question}
In Theorem~\ref{t:Composition},
can the infinite-dimensional intermediate results
be avoided? Can $h$ be chosen total?
How far up on the Borel hierarchy of measurability
do compositions of $k$ strongly analytic functions reach/cover?
\end{question}
Indeed, strongly analytic machines can encode infinite sequences 
into single reals and back; but a priori, each such operation incurs 
an additional machine, thus resulting in the composition
in Theorem~\ref{t:Composition} to become three-fold.

\section{Comparing Weakly and Strongly Analytic Machines} \label{s:Weakly}
It will turn out (Theorem~\ref{t:Main})
that weakly analytic machines are
essentially strongly ones equipped with oracle access
to the \BSS Halting problem.

We first record the following relativizations of
Theorem~\ref{t:Graph} and Corollary~\ref{c:StrongBorel}:

\begin{cor} \label{c:RelativeBSS}\hfill
\begin{enumerate}[\em a)]
\item
Each set $\IS\subseteq\IR^d$ \BSS-semi\-decidable
with oracle $\IH$ necessarily belongs to Borel class $\BorelS_3$.
\item
Every $\IS\in\BorelS_2$ is \BSS-semidecidable with oracle $\IH$.
\item
Each total function $f:\IR^d\to\IR$
computable by a strongly analytic machine with $\IH$-oracle
is $\BorelS_3$-measurable.
\item
Every $\BorelS_2$-measurable total function $f:\IR^d\to\IR$
is computable by a strongly analytic machine with $\IH$-oracle.
\end{enumerate}
\end{cor}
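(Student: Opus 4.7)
The plan is to derive all four items by relativizing the arguments of Theorem~\ref{t:Graph} and Corollary~\ref{c:StrongBorel} with the aid of Fact~\ref{f:Cucker}. For (a), apply Fact~\ref{f:Cucker} to write $\IS = \{\vec x : \exists y \in \IN\, \forall z \in \IN: (\vec x, y, z) \in \IW\}$ with $\IW$ BSS-decidable. For each fixed integer pair $(y,z)$ the slice $\IW_{y,z}:=\{\vec x:(\vec x,y,z)\in\IW\}$ is BSS-decidable and hence lies in $\BorelD_2 \subseteq \BorelP_2$ by Corollary~\ref{c:StrongBorel}b). Then $\bigcap_z \IW_{y,z}$ is a countable intersection of $\Gdelta$ sets and thus itself $\Gdelta = \BorelP_2$, so $\IS = \bigcup_y \bigcap_z \IW_{y,z}$ is a countable union of $\BorelP_2$ sets, i.e., $\BorelS_3$ by definition.

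For (b), write $\IS = \bigcup_n A_n$ with each $A_n \subseteq \IR^d$ closed, so that every complement $A_n^c$ is a countable union of rational open rectangles. All such rectangles, doubly indexed by $n$ and an enumeration index, can be packed into a single pre-stored real constant exactly as in the proof of Theorem~\ref{t:Graph}d,e). A BSS machine on input $(\vec x, n)$ then semidecides $\vec x \in A_n^c$ by searching through the enumeration; an $\IH$-oracle query about this subroutine decides $\vec x \in A_n$, after which an outer search over $n$ semidecides $\IS$ with $\IH$-oracle.

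For (c), relativize the proof of Corollary~\ref{c:StrongBorel}c) using Theorem~\ref{t:Graph}a) in its strongly-analytic-to-BSS-semidecidable direction. For $f$ computed by a strongly analytic machine $\calM^\IH$ and rationals $a<b$, the set $f^{-1}[(a,b)]$ is BSS-semidecidable with oracle $\IH$: simulate $\calM^\IH$ on $\vec x$, read off the approximation sequence $(y_n)$, and search for some $n$ with $y_n - 2^{-n} > a$ and $y_n + 2^{-n} < b$. By (a), this set lies in $\BorelS_3$; so for any open $V = \bigcup_i (a_i, b_i)$ with rational endpoints, $f^{-1}[V] = \bigcup_i f^{-1}[(a_i,b_i)]$ is $\BorelS_3$ as a countable union.

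For (d), which I regard as the main technical point, fix any decomposition $f^{-1}[(q, p)] = \bigcup_k A_{k,q,p}$ for $q, p \in \IQ$ with $A_{k,q,p}$ closed (such decompositions exist by $\BorelS_2$-measurability of $f$), and encode the countably many open rational rectangles defining every $A_{k,q,p}^c$ into a single pre-stored real constant. On input $\vec x$ and precision $n$, dovetail a search over all triples $(q,p,k) \in \IQ \times \IQ \times \IN$ with $p - q \leq 2^{-n+1}$, using $\IH$ exactly as in (b) to decide $\vec x \in A_{k,q,p}$ uniformly in $(k,q,p)$. Density of $\IQ$ places $f(\vec x)$ inside some admissible interval $(q,p)$, the decomposition furnishes a witness $k$, and the search terminates; the output $y_n := (q+p)/2$ satisfies $|y_n - f(\vec x)| < 2^{-n}$, as required for strongly analytic computation. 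The principal obstacle is that the $\BorelS_2$-hypothesis is purely descriptive and non-uniform, so one must verify that a whole family of decompositions indexed by rational pairs genuinely fits into the finite information of a single real constant---which it does by standard interleaving of the enumerations.
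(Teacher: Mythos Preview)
Your proposal is correct and follows essentially the same route as the paper's proof. For (a) you slice $\IW$ and use Corollary~\ref{c:StrongBorel}b) on each slice, whereas the paper observes directly that the complement of $\{(\vec x,y):\forall z:(\vec x,y,z)\in\IW\}$ is \BSS-semidecidable and hence the set itself is $\BorelP_2$; for (b) you phrase the argument as ``use $\IH$ to decide $\vec x\in A_n$, then search over $n$'' while the paper invokes the $\exists\forall$-characterization of Fact~\ref{f:Cucker} explicitly; and for (d) you search over triples $(q,p,k)$ with $p-q\leq2^{-n+1}$ rather than centered intervals $(q-2^{-n},q+2^{-n})$. These are cosmetic variations on the same ideas.
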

\proof\hfill
\begin{enumerate}[a)] 
\item
follows from Corollary~\ref{c:StrongBorel}a) and 
Fact~\ref{f:Cucker}b), observing that 
$\{(\vec x,y):\forall z\in\IN:(\vec x,y,z)\in\IW\}$
is in $\BorelP_2$ because its complement is \BSS-semidecidable.
\item
Let $\IS=\bigcup_n \IA_n$ with $\IA_n$ closed,
i.e. the complement of $\IA_n$ is of the form
$\bigcup_m (\vec a_{n,m},\vec b_{n,m})$ with rational
corners $\vec a_{n,m},\vec b_{n,m}$. Just like
in the proof of Theorem~\ref{t:Graph}e),
this rational double sequence can be encoded into 
one single real constant in order to enable a \BSS
machine deciding 
$\IW:=\{(\vec x,n,m):\vec x\not\in(\vec a_{n,m},\vec b_{n,m})\}$.
Now apply Fact~\ref{f:Cucker}c) to
$\bigcup\nolimits_n \IA_n=\big\{ \vec x\big| 
\exists n \forall m : (\vec x,n,m)\in\IW \big\}$.
\item
Like in the proof of Corollary~\ref{c:StrongBorel}c)
and relativizing Theorem~\ref{t:Graph}a),
we observe that $f^{-1}[\bigcup_n(a_n,b_n)]$ 
is semidecidable by a \BSS machine with $\IH$-oracle. Now apply a).
\item
For $q\in\IQ$ and $n\in\IN$, consider the
$\BorelS_2$-set $f^{-1}[(q-2^{-n},q+2^{-n})]$.
Extending Item~b), we see that these sets are
\BSS-semidecidable with $\IH$-oracle 
\emph{uniformly} in $q$ and $n$.
Hence given $\vec x$, an $\IH$-oracle machine
can search and output, for each $n\in\IN$, some
$q\in\IQ$ with $\vec x\in f^{-1}[(q-2^{-n-1},q+2^{-n-1})]$.
\qed\end{enumerate}
\mycite{Theorems~2.15+2.16}{Cucker} establishes two natural problems over the reals as
\BSS-equivalent for (i.e. many-one reducible from and to) $\IH^\IH$. 
The next section will add \emph{Boundedness}; and Theorem~\ref{t:Convergence} shows
(the complement of) \emph{Convergence} \BSS-equivalent to $\IH^{\IH^\IH}$.

\subsection{The Boundedness Problem and Weak
  Semidecidability} \label{s:WeakSemi}\hfill

\noindent Consider the boundedness problem
for analytic machines:
\[ \IB \;\;:=\;\;
\big\{ (\langle\calM\rangle,\vec x):
\text{machine $\calM$ produces on input $\vec x$
some bounded sequence $(\vec y_n)_{_n}$} \big\} \enspace . \]
By convention, we regard also a finite sequence as bounded.

\begin{prop} \label{p:Boundedness}\hfill
\begin{enumerate}[\em a)]
\item
A \BSS machine with oracle access to $\IH$
can semidecide $\IB$
\item
but cannot decide $\IB$. More precisely, it holds
$\IH^\IH\BssReduceq\IB$.
\end{enumerate}
\end{prop}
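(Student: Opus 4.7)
For part \textbf{(a)}, the plan is to exhibit $\IB$ in the $\KleeneS_2$ normal form of Fact~\ref{f:Cucker}(b) and invoke the equivalence with (a). Specifically, $(\langle\calM\rangle,\vec x)\in\IB$ holds if and only if
\[ \exists K\in\IN\;\forall (n,t)\in\IN:\; P\big(\langle\calM\rangle,\vec x,K,n,t\big), \]
where $P$ is the \BSS-decidable predicate asserting that \emph{if} $\calM$ on input $\vec x$ has printed an $n$-th output within its first $t$ simulated steps, \emph{then} the norm of that output is at most $K$. The control flow of an analytic machine is simulable step by step by a plain \BSS machine, making $P$ decidable; and the convention that finite or eventually-ceasing output sequences count as bounded is exactly what makes the implicational formulation correct (the antecedent is simply never satisfied once further outputs stop appearing). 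Pairing $(n,t)$ into a single integer to match the single-universal-quantifier shape of Fact~\ref{f:Cucker}, the Fact then yields \BSS-semidecidability of $\IB$ with oracle $\IH$.

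For part \textbf{(b)}, I would produce a \BSS-computable many-one reduction from $\IH^\IH$ to $\IB$. By Fact~\ref{f:Cucker} applied to $\IH^\IH$ itself, there is a \BSS-decidable $\IW$ with $u\in\IH^\IH$ iff $\exists y\in\IN\;\forall z\in\IN:(u,y,z)\in\IW$. Design a \emph{fixed} \BSS machine $\calM_0$ that, on any input $u$, maintains a counter $y$ initialised to $0$, repeatedly outputs the current value of $y$, and in parallel enumerates $z=0,1,2,\ldots$ testing $(u,y,z)\in\IW$. As soon as it finds a $z$ violating the test, $\calM_0$ increments $y$, resets the inner loop to $z=0$, outputs the new value, and continues. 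If $u\in\IH^\IH$ then the counter eventually stabilises at the least valid witness $y$ and the output sequence is ultimately constant, hence bounded. If $u\notin\IH^\IH$ then every candidate $y$ is eventually refuted by some $z$, so the counter --- and hence the output sequence --- grows without bound. The reduction $u\mapsto(\langle\calM_0\rangle,u)$ is then trivially \BSS-computable, since $\langle\calM_0\rangle$ is a fixed constant code.

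The main subtlety I anticipate is presentational: formalising the ``simulate for $t$ steps'' statement in (a) as a genuinely \BSS-decidable predicate over integer tuples (routine, via an integer pairing), and in (b) making sure that what is tested for boundedness is the \emph{output} stream of $\calM_0$ rather than any internal state, so that boundedness of the output truly coincides with eventual stabilisation of $y$. Beyond these bookkeeping issues the argument is entirely straightforward, and combining (a) with (b) places $\IB$ exactly at the \BSS-degree of $\IH^\IH$, as claimed.
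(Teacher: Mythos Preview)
Your proof is correct and follows essentially the same approach as the paper's. The only cosmetic difference is that for part~(a) you invoke the $\KleeneS_2$ normal form of Fact~\ref{f:Cucker} explicitly, whereas the paper unwinds this into the direct algorithm ``for $n=1,2,\ldots$ use $\IH$ to test whether some output exceeds~$n$''; for part~(b) your construction and the paper's are identical up to whether the machine keeps printing the current counter value during the inner $z$-search (yielding an eventually constant infinite sequence) or only prints once per increment (yielding a finite sequence), both of which are bounded by the stated convention.
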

\proof\hfill
\begin{enumerate}[a)]
\item
Given $\calM$ and $\vec x$,
iteratively try the bounds $n=1,2,\ldots$
and use oracle access to $\IH$ in order to 
detect whether some output of $\calM$ on $\vec x$
has norm exceeding $n$:
If so, retry with $n+1$; otherwise accept.
\item
Since $\IH^\IH$ is semidecidable relative to $\IH$,
it has the form of Equation~(\ref{e:Sigma2}).
Now consider the \BSS machine $\calM$ executing the
following algorithm:
Given $\vec x$ and iteratively for each 
$y=1,2,\ldots$, $\calM$ looks for some $z=1,2,\ldots$
such that $(\vec x,y,z)\not\in\IW$.
If such a $z$ is found, 
$\calM$ outputs $y$ and restarts with $y+1$;
otherwise $\calM$ keeps looking for $z$ indefinitely.
\\
For $\vec x\in\IH^\IH$, the above machine will thus 
eventually find an $y$ that leads to
an infinite loop on $z$; and hence a bounded
(even finite) output sequence.
Whereas for $\vec x\not\in\IH^\IH$, every $y\in\IN$ will
eventually be output by $\calM$, that is, an
unbounded sequence.
\qed\end{enumerate}
Theorem~\ref{t:Graph}d) suggests the following

\begin{defi} \label{d:Weak}
A set $\IS\subseteq\IR^*$ is
\aname{weakly decidable} ~iff~
its characteristic function $\cf{\IS}:\IR^*\to\{0,1\}$
is computable by a weakly analytic machine.
\\
$\IS$ is \aname{weakly semidecidable}
~iff~ $\IS$ is \BSS-semidecidable with $\IH$-oracle.
\end{defi}
In view of Fact~\ref{f:Cucker} and Proposition~\ref{p:Boundedness}, 
$\IS$ is weakly semidecidable ~iff~ it is \BSS many-one reducible
to $\IH^\IH$ or, equivalently, to $\IB$.

\begin{exas} \label{x:Weak}\hfill
\begin{enumerate}[a)]
\item
 For a function $f:\IR^*\to\IR^k$ 
 computable by a weakly analytic machine and
 for open $V\subseteq\IR^k$, the pre-image $f^{-1}[V]\subseteq\IR^*$
 is weakly semidecidable.
\item
 Every $\BorelS_2$-set is weakly semidecidable.
\end{enumerate}
\end{exas}
\proof 
Let $(\vec y_n)$
be a sequence output by the weakly analytic machine
on input $\vec x$, i.e. with $\lim_n\vec y_n=\vec y:=f(\vec x)$.
In view of Theorem~\ref{t:Graph}e), we can assume
to have an enumeration $(V_m)$ of rational open rectangles
exhausting $V=\bigcup_m V_m$ at our disposition.
\begin{enumerate}[a)]
\item
For each $m,k=1,2,\ldots$ test whether it holds
that the rectangle $[\vec y_n-2^{-m},\vec y_n+2^{-m}]$
is contained in $V_k$ for all
$n\geq m$. This can be achieved by setting up a
machine $\calN$ searching for a counter-example $n$
and querying oracle $\IH$ for non-termination of $\calN$.
If so, since $\vec y\in[\vec y_n-2^{-m},\vec y_n+2^{-m}]$
for all sufficiently large $n$, it follows $\vec y\in V$
and we can safely accept.
Conversely in case $\vec y\in V_k$, 
it holds $[\vec y_n-2^{-m},\vec y_n+2^{-m}]\subseteq V_k$
for all sufficiently large $n,m$; hence the above search
succeeds.
\item
Let $V=\bigcup_{j} A_j\in \BorelS_2$. Analogously to the proof of Theorem~\ref{t:Graph}e), the closed sets $A_j$ can be represented as complements of a countable union of open rectangles with rational corners $A_j=\left(\bigcup_{i} (\vec a_{j,i},\vec b_{j,i})\right)^\complement$. The rational coordinates of $\vec a_{j,i}$ and $\vec b_{j,i}$ can all be encoded into one real constant. A machine that semidecides $\vec x\in V$ tries, for increasing $n=1,2,\ldots$, whether $\vec x\in A_n$. To this end, the coordinates of the rectangles exhausting $A_n$ are extracted and for increasing $m$ the condition $\vec x\in(\vec a_{n,m},\vec b_{n,m})$ is checked. After each check, the machine outputs $n$, and as soon as a rectangle containing $x$ is found, the machine proceeds to the next $n$. If $\vec x \in V$, then it is in some $A_n$, and therefore in no rectangle $(\vec a_{n,m},\vec b_{n,m})$, $m\in\IN$. In this case, the machine never exceeds stage $n$. If, on the other hand, $\vec x \not\in V$, then for each $n$ there is such a rectangle, and the machine reaches (and outputs) each $n\in\IN$.
\qed\end{enumerate}

\subsection{Weakly Analytic Machines are the Jump of Strongly Analytic
  Ones}\hfill

\noindent Definition~\ref{d:Weak} is justified by Item~a) of the following

\begin{thm} \label{t:Main}\hfill
\begin{enumerate}[\em a)]
\item
$\IS\subseteq\IR^*$ is weakly
decidable 
~iff~ both $\IS$ and its complement are
weakly semidecidable.
\item
If a (possibly partial) function $f:\IR^*\to\IR^*$
is computable by a weakly analytic machine,
$f$ is also computable by a strongly analytic
machine with oracle access to $\IH$.
\item
Conversely, if $f$ is computable by a strongly analytic
machine with oracle access to $\IH$,
then some extension of $f$ is computable by a weakly analytic machine.
\end{enumerate}
\end{thm}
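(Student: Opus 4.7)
The plan is to prove parts (b) and (c) first — together they establish the Shoenfield-style equivalence ``weakly analytic $\equiv$ strongly analytic with $\IH$-oracle'' — and then deduce (a) via the observation that a characteristic function is total, hence admits only itself as an extension. The two pivotal moves are: for (b), using $\IH$ to locate an effective Cauchy witness inside the weakly convergent output; for (c), replacing each $\IH$-call by a time-bounded \BSS halting check whose guess eventually stabilises.

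For (b), let a weakly analytic $\calM$ on $\vec x \in \dom(f)$ emit $(\vec y_k)$ with $\lim_k \vec y_k = \vec y := f(\vec x)$ and eventual dimensional agreement. Given precision $2^{-n}$, the oracle machine $\calN^\IH$ enumerates $m = 1, 2, \ldots$ and uses one oracle query per $m$ to test the $\KleeneP_1$ condition ``$\dim(\vec y_k) = \dim(\vec y_m)$ and $\|\vec y_k - \vec y_m\| \leq 2^{-n-1}$ for all $k \geq m$'' — its negation is \BSS-semidecidable by searching for an offending $k$. Such an $m$ exists by weak convergence, and passing to the limit yields $\|\vec y_m - \vec y\| \leq 2^{-n-1}$, so $\vec z_n := \vec y_m$ suffices. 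On $\vec x \notin \dom(f)$, Remark~\ref{r:partial} forces $(\vec y_k)$ to fail weak convergence, so for some $n$ the search for $m$ never succeeds and $\calN^\IH$ legitimately diverges (outputs at most finitely many approximations), as required for strong computability of a partial function.

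For (c), given a strongly analytic $\calM^\IH$ computing $f$, the weakly analytic $\calN$ operates in stages $s = 1, 2, \ldots$. At stage $s$, $\calN$ fresh-simulates $\calM^\IH$, answering each query ``$\vec u \in \IH$?'' affirmatively iff the universal \BSS machine halts on $\vec u$ within $s$ steps; the simulation is run for $s$ steps, collecting its approximations $(\vec a_{s,j})_j$. Output $\vec w_s := \vec a_{s,k_s}$, where $k_s$ is the largest index such that $\|\vec a_{s,j} - \vec a_{s,l}\| \leq 2^{-j} + 2^{-l}$ for all $j, l \leq k_s$ — a ``Cauchy filter'' that discards approximations contaminated by wrong oracle answers. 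Convergence on $\dom(f)$ rests on two observations: (i) each oracle answer eventually stabilises, since ``yes'' queries halt in a fixed finite time and ``no'' queries are always correctly rejected under any finite time-bound; and (ii) any fixed $n$ is witnessed by only finitely many queries in the genuine $\calM^\IH$-computation of $\vec z_n$, so for $s$ large the simulation faithfully reproduces $\vec z_1, \ldots, \vec z_n$, forcing $k_s \geq n$ and hence $\|\vec w_s - \vec z_n\| \leq 2^{-k_s} + 2^{-n} \leq 2 \cdot 2^{-n}$ by the Cauchy property. The main obstacle — and the reason only \emph{some extension} is claimed — is that for $\vec x \notin \dom(f)$ the filtered simulation may accidentally converge even when $\calM^\IH$ violates Equation~(\ref{e:Variation}), so $\calN$ may compute a proper extension of $f$.

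For (a), the ``$\Rightarrow$'' direction reads off $\IS$ from its weakly analytic characteristic function: $\vec x \in \IS \Leftrightarrow \exists N \, \forall n \geq N: y_n(\vec x) > 1/2$. The matrix is \BSS-decidable because a weakly analytic machine is itself a \BSS machine run for $n$ many outputs, so Fact~\ref{f:Cucker} yields weak semidecidability of both $\IS$ and $\IS^\complement$. The ``$\Leftarrow$'' direction chains the two halves of the $\IH$-correspondence: weak semidecidability of $\IS$ and of $\IS^\complement$ makes $\cf{\IS}$ \BSS-computable with $\IH$-oracle, hence strongly analytic with $\IH$-oracle; part (c) then furnishes a weakly analytic machine for \emph{some extension} of $\cf{\IS}$, and totality of $\cf{\IS}$ on $\IR^*$ pins this extension to $\cf{\IS}$ itself. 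The principal technical hurdle I anticipate is the Cauchy-filtering step in (c): verifying that the filter correctly recovers the genuine approximations on $\dom(f)$ while also respecting the dimensional-agreement clause of Equation~(\ref{e:Analytic}).
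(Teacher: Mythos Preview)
Your proofs of (b) and (c) are essentially the paper's: extract a strongly Cauchy subsequence via $\IH$-queries to a co-semidecidable tail condition, and simulate the oracle machine under time-bounded halting guesses that eventually stabilise (your explicit ``Cauchy filter'' is a harmless variant of the paper's ``increase the simulation budget until the conditions of Equation~(\ref{e:VariationSigma}) hold''). Part~(a), however, you handle by a genuinely different route. The paper argues both directions directly through the boundedness problem $\IB$ of Proposition~\ref{p:Boundedness}: forward, it transforms the weakly convergent output $(y_n)$ into $u_n := 1/\max\{y_n,1/n\}$ and $v_n := 1/\max\{1-y_n,1/n\}$, whose boundedness detects $y_n\to 1$ versus $y_n\to 0$ and hence gives explicit \BSS many-one reductions of $\IS$ and $\IS^\complement$ to $\IB$; backward, it runs the two $\IB$-witnesses in parallel and emits $0$ or $1$ according to which one first exceeds the current bound~$n$. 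Your route is more modular: forward, you read off a description of the form of Equation~(\ref{e:Sigma2}) directly from the stabilising output and invoke Fact~\ref{f:Cucker}; backward, you observe that $\IH$-decidability of $\IS$ makes $\cf{\IS}$ strongly analytic relative to $\IH$, apply~(c), and use totality of $\cf{\IS}$ to collapse ``some extension'' to $\cf{\IS}$ itself. Both approaches are correct; the paper's is self-contained and produces concrete reductions independent of (c), while yours cleanly exhibits~(a) as a corollary of the function-level equivalence together with Cucker's quantifier characterisation.
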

\noindent The equivalence in b+c) constitutes an analytic analogue
of the \aname{Shoenfield Limit Lemma}. The slight mismatch
with respect to partial functions resembles
Theorem~\ref{t:Graph}abc) and raises

\begin{question}
Is every partial function computable by a strongly analytic machine
with oracle access to $\IH$ also
computable by a weakly analytic machine?
\end{question}

\proof (Theorem~\ref{t:Main})\hfill
\begin{enumerate}[a)]
\item
Suppose $\calM$ is a weakly analytic machine computing $\cf{\IS}$,
and let $(y_n)$ denote the sequence output by $\calM$ on input $\vec x$.
We show that both $\IS$ and its complement $\IS^\complement$ are
reducible to $\IB$. To this end modify $\calM$ to output 
$u_n:=1/\max\{y_n,1/n\}$: Since $\{0,1\}\ni\lim_n y_n$ exists,
$u_n$ is bounded ~iff~ $y_n\to1$ ~iff~ $\vec x\in\IS$.
Similarly, $v_n:=1/\max\{1-y_n,1/n\}$ is bounded ~iff~ $y_n\to0$ ~iff~ $\vec x\not\in\IS$.
\\
Conversely consider \BSS machines $\calM$ and $\calN$
computing reductions from $\IS$ and $\IS^\complement$ to $\IB$, respectively.
Then the following machine weakly computes $\cf{\IS}$:
Given input $\vec x$, test (by parallel simulation) for increasing bounds $n=1,2,\ldots$ 
whether some output of $\calM$ or of $\calN$ exceed the bound $n$. 
If so, append ``$0$'' to the output if it was $\calM$,
and ``$1$'' if it was $\calN$; then increment the bound $n$.
Since $\vec x$ belongs to exactly one of $\IS$ and $\IS^\complement$,
precisely one of $\calM,\calN$ produces an unbounded sequence;
and our output thus becomes a stationary sequence of $1$s 
(in case $\vec x\in\IS$) or of $0$s ($\vec x\in\IS^\complement$), respectively.
\item
Let $\calM$ denote a weakly analytic machine computing $f$
and $(\vec y_n)$ the (possibly finite) sequence output on input $\vec x$.
We describe another machine $\calN$ that uses oracle queries to $\IH$ 
in order to output a subsequence of $(\vec y_n)$ satisfying Equation~(\ref{e:Variation}).
Note that the violation of this condition can be detected by searching for $n,m$
and hence is semidecidable. Using $\IH$, $\calN$ can thus decide, 
for each required precision index $k=1,2,\ldots$ and each $K\in\IN$,
whether $\|\vec y_K-\vec y_m\|\leq2^{-k}+2^{-m}$ holds for all $m$.
On the other hand such $K=K(k)$ exists to every $k$
~iff~ $(\vec y_n)$ converges.
$\calN$ will thus, iteratively for $k=1,2,\ldots$, search for such
a $K$ and, when found, output the corresponding $\vec y_K$.
Note that, if $\calM$ outputs only a finite sequence,
so will $\calN$.
In effect, the subsequence printed by $\calN$ satisfies
the bottom of Equation~(\ref{e:Analytic}) ~iff~ the 
original sequence printed by $\calM$ satisfies the top of
Equation~(\ref{e:Analytic}).
\item
Assume $f$ is computed by the strongly analytic machine $\calM$ with oracle access to $\IH$. We describe a weakly analytic machine $\calN$ that computes $f$. Fix an input $\vec x\in\IR^*$. For $\sigma\in\{0,1\}^\IN$, let $\vec y_n^\sigma$ be the output of the machine $\calM$, simulated under the assumption that the $j$-th oracle query is negative or positive, depending on $\sigma(j)$. For increasing $n=1,2,\ldots$ (\emph{simulation level}), the machine $\calN$ simulates (without output) $\calM$ up to the $n$-th output. It simulates all machines queried by the oracle, up to $n$ output steps (or until they halt), and stores the knowledge about the oracle answer in the sequence $\sigma_n\in\{0,1\}^\IN$ (0: does not halt, 1: halts), initially assuming all oracle queries to be answered negatively. In addition, the conditions
\begin{equation} \label{e:VariationSigma}
\|\vec y_i^{\sigma_n}-\vec y_j^{\sigma_n}\| \;\leq\; 2^{-i}+2^{-j} \quad\forall 1\leq i<j\leq n
\enspace .
\end{equation}
are checked. If one of these conditions is violated, the number of steps of all simulated oracle queries is increased until all these conditions are fulfilled. As soon as this is the case, $\calN$ outputs $\vec y_n^{\sigma_m}$, $m$ being the number of simulated steps of the oracle queries. Then, $\calN$ proceeds to level $n+1$.\\
Given $\vec x\in \dom(f)$ and $N\in\IN$, there is a number of simulation steps $n_0(N)$ after which all oracle queries made until output $N$ of $\calM$ have been answered correctly. At level $n\geq n_0(N)$, $\calM$ produces an output $\vec y_n^{\sigma_n}$ which, because of Equation~(\ref{e:VariationSigma}), satisfies $\|\vec y_N^{\sigma_n}-\vec y_n^{\sigma_n}\|\leq2^{-N}+2^{-n}$. Furthermore, because at level $n$, all oracle assumptions up to output $N$ are correct, we know that $\vec y_N^{\sigma_n} = \vec y_N$. Therefore, the outputs of $\calN$ correctly converge to $\vec y$.
\qed
\end{enumerate}
%
In connection with Corollary~\ref{c:RelativeBSS}, we conclude

\begin{cor} \label{c:Weak}\hfill
\begin{enumerate}[\em a)]
\item
Every weakly semidecidable $\IS\subseteq\IR^d$ 
belongs to Borel class $\BorelS_3$.
\item
Every function $f:\IR^d\to\IR^k$
computable by a weakly analytic machine
is $\BorelS_3$-measurable.
\item
Conversely, every $\BorelS_2$-measurable $f:\IR^d\to\IR^k$
is computable by a weakly analytic machine.
\end{enumerate}
\end{cor}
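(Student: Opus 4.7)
\medskip
\noindent\textbf{Proof proposal (Corollary~\ref{c:Weak}).}
The plan is to assemble the three items from results already in hand rather than to argue anything from scratch; essentially, Theorem~\ref{t:Main}b+c) lets us transport everything to the strongly analytic setting with $\IH$-oracle, where Corollary~\ref{c:RelativeBSS} provides the Borel bounds.

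For (a), I would simply unfold Definition~\ref{d:Weak}: a weakly semidecidable $\IS\subseteq\IR^d$ is by definition \BSS-semidecidable with oracle $\IH$, and Corollary~\ref{c:RelativeBSS}a) asserts exactly that such sets lie in $\BorelS_3$. No further work is needed.

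For (b), the cleanest route is via preimages. Fix an open $V\subseteq\IR^k$; Example~\ref{x:Weak}a) shows $f^{-1}[V]$ is weakly semidecidable, and part (a) just proved yields $f^{-1}[V]\in\BorelS_3$, which is the definition of $\BorelS_3$-measurability. (Alternatively, for total $f$, one could invoke Theorem~\ref{t:Main}b) to reinterpret $f$ as strongly analytic with $\IH$-oracle and apply Corollary~\ref{c:RelativeBSS}c) componentwise; the preimage argument is cleaner because it bypasses the partial/total mismatch highlighted in Remark~\ref{r:partial}.)

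For (c), I would first reduce to the scalar case: a function $f:\IR^d\to\IR^k$ is $\BorelS_2$-measurable iff each coordinate $f_i:\IR^d\to\IR$ is, so it suffices to produce weakly analytic machines $\calM_i$ for each $f_i$ and interleave their outputs into a vector-valued approximation. For a scalar $\BorelS_2$-measurable $f_i$, Corollary~\ref{c:RelativeBSS}d) supplies a strongly analytic machine with $\IH$-oracle computing $f_i$. Since $f_i$ is total, its domain is all of $\IR^d$ and an extension in the sense of Theorem~\ref{t:Main}c) must coincide with $f_i$; so Theorem~\ref{t:Main}c) delivers the desired weakly analytic machine directly. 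No part of this proof presents a genuine obstacle — the only subtlety worth flagging is the partial/total distinction, which is why I would run (b) through Example~\ref{x:Weak}a) and (c) through the totality of the target function to avoid invoking "some extension" in a context where it would weaken the statement.
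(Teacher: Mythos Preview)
Your proposal is correct and follows essentially the same route the paper intends: the text simply says ``In connection with Corollary~\ref{c:RelativeBSS}, we conclude'' and states the corollary, leaving the reader to combine Theorem~\ref{t:Main} with Corollary~\ref{c:RelativeBSS} exactly as you spell out. Your alternative for (b) via Example~\ref{x:Weak}a) is a minor reorganization but equally valid, and your handling of the totality issue in (c) to neutralize the ``some extension'' clause of Theorem~\ref{t:Main}c) is precisely the right observation.
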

\noindent Again, Corollary~\ref{c:Weak} is (almost) best possible:

\begin{exa} \label{x:BssBorel2}
The set $\IQ\times(\IR\setminus\IQ)$ 
is decidable by a weakly analytic machine
(but not in $\BorelP_2\cup\BorelS_2$).
\end{exa}
\begin{proof}
Since $\IQ$ is semidecidable by a \BSS machine,
it is decidable relative to $\IH$;
and so is $\IR\setminus\IQ$.
$\IQ\times(\IR\setminus\IQ)$ can be decided
relative to $\IH$ by testing both components
separately; hence this set is weakly decidable 
according to Theorem~\ref{t:Main}a).
\end{proof}

\begin{question}
Is there a set $S\subseteq\IR$
weakly semidecidable yet
such that $S\not\in\BorelP_3$?
\end{question}

\subsection{The Convergence Problem and Na\"ive Semidecidability} \label{s:Convergence}
Our proof of Example~\ref{x:Weak}a) erroneously accepts in case the
output sequence $\vec y_n$ fails to converge by having several 
accumulation points all contained in some $V_m$. This cannot happen
for $\vec y_n$ produced by the weak evaluation of a \emph{total}
function $f$. 

\begin{defi} \label{d:Convergence}
In view of the second part of Equation~(\ref{e:Analytic}), consider
\[ \IK :=
\big\{ (\langle\calM\rangle,\vec x):
\text{machine $\calM$ produces on input $\vec x$
some convergent infinite sequence $(\vec y_n)_{_n}$} \big\} \]
Call a set $\IS\subseteq\IR^*$ \aname{na\"ively semidecidable}
if there is a weakly analytic machine calculating
(i.e. printing a sequence of approximations which converge to)
\begin{enumerate}[i)]
\item the real number 0 for inputs $\vec x\in\IS$
\item $\bot$ (i.e. fails to converge) for inputs $\vec x\not\in\IS$.
\end{enumerate}
\noindent
A machine that produces only finitely many outputs is considered divergent.
\end{defi}
Diagonalization shows \cite{Vierke} 
that $\IK$ is undecidable
to a weakly analytic machine;
yet it can be written as the composition
of two functions computable by
weakly analytic machines \mycite{Theorem~2.3}{Gaertner}.

\begin{thm} \label{t:Convergence}\hfill
\begin{enumerate}[\em a)]
\item
If $\IS\subseteq\IR^*$ is 
na\"ively semidecidable, it is
\BSS many-one reducible to the convergence problem $\IK$.
\item
Conversely, every $\IS\subseteq\IR^*$ 
\BSS many-one reducible to $\IK$ 
is na\"ively semidecidable.
\item
The complement of $\IK$ is \BSS many-one 
reducible to $\IH^{\IH^\IH}$.
\item
Conversely, $\IH^{\IH^\IH}$ is
\BSS many-one reducible to $\IK^\complement$.
\end{enumerate}
\end{thm}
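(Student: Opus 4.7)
My plan is to prove the four parts in sequence. Part (a) is immediate: a weakly analytic machine $\calM$ na\"ively semideciding $\IS$ produces, by definition, a convergent infinite sequence on input $\vec x$ precisely when $\vec x\in\IS$, so $\vec x\mapsto(\langle\calM\rangle,\vec x)$ is the desired \BSS many-one reduction from $\IS$ to $\IK$.

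For (b), given a \BSS-computable reduction $r$ from $\IS$ to $\IK$ with $r(\vec x)=(\langle\calM\rangle,\vec u)$, I build a weakly analytic $\calN$---equivalently, via Theorem~\ref{t:Main}, a strongly analytic machine with $\IH$-oracle---that na\"ively semidecides $\IS$. On input $\vec x$, $\calN$ computes $r(\vec x)$ and iterates stages $n=1,2,\ldots$: at stage $n$ it first simulates $\calM$ on $\vec u$ until the $n$-th output $\vec y_n$ appears (stalling forever if not), then searches candidates $N=1,2,\ldots$ using the $\IH$-oracle to decide the $\KleeneP_1$-condition ``for every $m\geq N$, if $\calM$ produces $\vec y_m$ then $\|\vec y_N-\vec y_m\|\leq 2^{-n-1}$''; upon finding such an $N$ it outputs $2^{-n}$ and advances. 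The critical point is ensuring divergence both for finite $(\vec y_n)$ (handled by stalling at the simulation step) and for infinite non-Cauchy $(\vec y_n)$ at scale $2^{-k^*}$ (handled by stalling at the search step, since no $N$ satisfies the condition once $n\geq k^*$); in the infinite Cauchy case, every stage succeeds and the output $2^{-1},2^{-2},\ldots$ converges to $0$ as required.

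For (c), I show $\IK^\complement$ is \BSS-semidecidable relative to oracle $\IH^\IH$, whence the natural one-level-up iterate of Fact~\ref{f:Cucker} (applied with $\IH^\IH$ in place of $\IH$) yields $\IK^\complement\BssReduceq\IH^{\IH^\IH}$. Writing $A(n)$ for the \BSS-semidecidable predicate ``$\calM$ on $\vec u$ has produced an $n$-th output'', non-convergence decomposes as the disjunction of the $\KleeneS_2$-statement ``$(\vec y_n)$ is finite'' ($\exists n\colon A(n)\land\neg A(n{+}1)$) and the $\KleeneS_3$-statement ``$(\vec y_n)$ is infinite but not Cauchy'' ($[\forall n\,A(n)]\land[\exists k\forall N\exists m\geq N\colon A(m)\land\|\vec y_N-\vec y_m\|>2^{-k}]$), jointly $\KleeneS_3$ in the \BSS arithmetical hierarchy.

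For (d), iterating Fact~\ref{f:Cucker} yields a \BSS-decidable $T$ with $\vec x\in\IH^{\IH^\IH}\iff\exists y\forall z\exists w\colon T(\vec x,y,z,w)$. I construct a weakly analytic $\calM$ running phases $t=1,2,\ldots$: at phase $t$, for each $(y,z)$ with $y,z\leq t$, $\calM$ queries its $\IH$-oracle for the $\KleeneS_1$-statement ``$\exists w\colon T(\vec x,y,z,w)$'' and declares $y$ \emph{dismissed} once some $z$ witnesses its negation; letting $y_t^*$ denote the least undismissed $y\leq t$ (or $t{+}1$ if all are dismissed), $\calM$ outputs $(-1)^t/y_t^*$. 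If $\vec x\in\IH^{\IH^\IH}$, the least witness $y_0$ is never dismissed, so $y_t^*$ stabilizes at $y_0$ and the output oscillates between $\pm 1/y_0$, hence diverges; if $\vec x\notin\IH^{\IH^\IH}$, every $y$ is eventually dismissed, $y_t^*\to\infty$, and the output tends to $0$. The main obstacle is selecting an output transformation that diverges exactly when $y_t^*$ \emph{fails} to stabilize; the alternating factor $(-1)^t$ supplies this inversion---a naive output of $y_t^*$ itself would instead reduce the \emph{complement} of $\IH^{\IH^\IH}$ to $\IK^\complement$.
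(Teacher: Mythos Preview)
Your treatments of (a) and (c) are correct and essentially match the paper; in (c) you are in fact more careful than the paper in explicitly separating the finite-output case from the non-Cauchy case.

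Parts (b) and (d), however, share a common gap. Both constructions query the oracle $\IH$ and then invoke Theorem~\ref{t:Main} to pass to a weakly analytic machine. But Theorem~\ref{t:Main}(c) only guarantees a weakly analytic machine computing some \emph{extension} of the partial function computed by the oracle machine: on inputs where your oracle machine diverges---exactly the inputs $\vec x\notin\IS$ in (b) and $\vec x\in\IH^{\IH^\IH}$ in (d)---the weakly analytic simulant is allowed to converge, which would destroy na\"ive semidecidability respectively the reduction to $\IK^\complement$. The paper explicitly leaves the exact correspondence for partial functions open in the Question following Theorem~\ref{t:Main}, so you cannot simply cite it. Concretely in (b), your oracle machine emits the fixed tail $2^{-1},2^{-2},\ldots$ as far as it gets, so the variation check inside the proof of Theorem~\ref{t:Main}(c) is always satisfied and never forces wrong oracle guesses to be corrected. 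In (d) there is the further problem that the code $\langle\calM\rangle$ handed to $\IK$ must describe an ordinary, oracle-free machine in the first place.

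The paper's arguments avoid oracles entirely and are shorter. For (b), a plain \BSS machine simulates $\calM$ on $\vec u$ and replaces the output by the re-indexed sequence $\big(|y_n-y_m|\big)_{\langle n,m\rangle}$: this is finite iff $(y_n)$ is, tends to $0$ iff $(y_n)$ is Cauchy, and otherwise has both $0$ (along $n=m$) and some positive value as accumulation points. For (d), a \BSS machine runs in parallel over $u$ the loop ``$v:=1$; for $w=1,2,\ldots$ output $0$; if $(\vec x,u,v,w)\in\IW$ output $2^{-u}$, increment $v$, and restart the $w$-loop''; the interleaved output then has distinct accumulation points $0$ and $2^{-u}$ whenever some $u$ satisfies $\forall v\,\exists w$, and converges to $0$ otherwise.
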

Since $\IH^{\IH^\IH}$ is strictly harder than $\IH^\IH$,
convergence is strictly harder than
boundedness; and weak semidecidability 
is strictly stronger a notion than 
na\"ive semidecidability.

\proof\hfill
\begin{enumerate}[a)]
\item
Let $\calM$ na\"ively semidecide $\IS$.
Then $\vec x\mapsto(\langle\calM\rangle,\vec x)$ constitutes a
\BSS-computable many-one reduction of $\IS$ to $\IK$:
For $\vec x\in\IS$, $\calM$ on input $\vec x$ outputs a
sequence converging to 0; whereas for $\vec x\not\in\IS$,
$\calM$ on input $\vec x$ outputs a divergent sequence.
\item
Consider a many-one reduction from
$\IS$ to $\IK$, i.e. mapping an instance $\vec u$ for $\IS$ 
to an instance $(\langle\calM\rangle,\vec x)$ of $\IK$. 
Consider a \BSS machine which simulates $\calM$ and
replaces its output sequence $(y_n)$ by
the sequence $(|y_n-y_m|)_{_{\langle n,m\rangle}}$,
where $\langle\,\cdot\,,\,\cdot\,\rangle:\IN\times\IN\to\IN$
denotes a recursive pairing function.
To see that this machine na\"ively semidecides $\IS$,
observe that $(y_n)$ converges (i.e. is Cauchy) ~iff~
$(|y_n-y_m|)_{_{\langle n,m\rangle}}$ converges to 0: 
To every $N\in\IN$ there is some $M\in\IN$ such that
$n,m\geq N$ implies $\langle n,m\rangle\geq M$; and, conversely,
to every $M\in\IN$ there is some $N\in\IN$ such that
$\langle n,m\rangle\geq M$ implies $n,m\geq N$. 
\item
We employ from \mycite{Theorem~2.11}{Cucker}
the following extension of Fact~\ref{f:Cucker}b+c):
\begin{quote}
$\IS\subseteq\IR^*$ is \BSS many-one reducible to
$\IH^{\IH^\IH}$ ~iff~ there exists some \BSS-decidable 
$\IW\subseteq\IR^*$ such that
\begin{equation} \label{e:Sigma3}
\IS\;=\;\big\{ \vec x\in\IR^* \;\big|\; \exists u\in\IN
\;\forall v\in\IN\;\exists w\in\IN: \; (\vec x,u,v,w)\in\IW\big\} 
\enspace . \end{equation}\end{quote}
Now observe that an infinite real sequence $(y_n)$ fails to converge
~iff~
\[ \exists k\in\IN \;\forall K\in\IN\;\exists \langle n,m,\ell\rangle\in\IN:
\quad n,m\geq K \;\wedge\; |y_n-y_m|\geq1/k \enspace . \]
Finally, to take into account the computation of $(y_n)$,
consider the \BSS-decidable
\begin{multline*} 
\IW \;:=\;\big\{ (\langle\calM\rangle,x,k,K,n,m,\ell):
\vec x\in\IR^*,k,K,n,m,\ell\in\IN;\;n\geq m\geq K \text{ and} \\
\text{$\calM$ on input $\vec x$ within $\ell$ steps
outputs $y_1,\ldots,y_m,\ldots,y_n$ with } |y_n-y_m|\geq1/k \big\} 
\end{multline*}
\item
Again we invoke the characterization from \mycite{Theorem~2.11}{Cucker}
and show that every $\IS$ of the form (\ref{e:Sigma3}) is 
\BSS many-one reducible to $\IK^\complement$.
To this end execute the following procedure for each
$u\in\IN$ in parallel: 
\begin{quote}
Let $v:=1$ and, for each $w=1,2,\ldots$
output ``$0$''. Moreover, if $(\vec x,u,v,w)\in\IW$,
output ``$2^{-u}$'', increment $v$, and restart with $w=1,2,\ldots$
\end{quote}
Observe that, if $\forall v\exists w: (\vec x,u,v,w)\in\IW$ holds,
this will for each such $u$ produce a sequence with accumulation points precisely $0$ and $2^{-u}$;
and otherwise a sequence containing finitely many $2^{-u}$'s and $0$'s otherwise.
Hence, if $\exists u\forall v\exists w: (\vec x,u,v,w)\in\IW$ holds,
the parallel search for such $u$ will result in a non-converging output;
and otherwise in an output converging to 0.
\qed\end{enumerate}

\section{Conclusion}
In Recursive Analysis, adding oracle access\footnote{%
in the sense of querying digits of a single infinite sequence.
As a referee kindly pointed out, this corresponds more to a
pre-stored constant of a \BSS machine than to real number
oracle queries. Other notions of oracles in Recursive Analysis
are discussed in \cite{ClosedChoice}.}
(to the, say, Halting Problem)
does not increase the topological power of computation:
Computable real functions are still necessarily continuous
(i.e. $\BorelS_1$-measurable).
Relaxing the output representation from approximations
with error bounds to converging approximations without error bounds,
however, does increase the topological capabilities by
proceeding one step up the (effective) Borel Hierarchy.

For Analytic Machines, on the other hand, we have revealed
both to be equivalent: relaxing output with to without 
error bounds ~and~ permitting oracle access to the
\BSS Halting Problem. Both amount to climbing up one step 
in the (non-effective) Borel Hierarchy---although the 
algebraic model lies slightly skewly to its levels.

\begin{question}
How about degrees of quasi-strongly analytic machines?
\end{question}
These are a blend of weak and strong ones,
required to provide
error bounds which, however, they are permitted to violate a
finite (yet unbounded) number of times.


\end{document}